\newif\ifFULL
\newif\ifCLASSINFOpdf\CLASSINFOpdffalse
\newcommand\IEEEauthorblockN[1]{#1\\[-0.5em]}
\newcommand\IEEEauthorblockA[1]{\parbox{4cm}{\center\normalsize #1}}
\newcommand\qed{\hfill$\Box$}
\newenvironment{proof}{{\bf Proof.}}{\qed\\

}
\newcommand\ForClause[3]{\ensuremath{\syntax{\term{for}\ \term{\$}#1\ \term{in}\ #2\ \term{return}\ #3}}}
\newcommand\LetClause[3]{\ensuremath{\syntax{\term{let}\ \term{\$}#1\ \term{:=}\ #2\ 
\term{return}\ #3}}}
\DeclareMathAlphabet{\mathtt}{OT1}{cmtt}{m}{n}
\SetMathAlphabet{\mathtt}{bold}{OT1}{cmtt}{m}{n}
\newcommand{\minXQuery}{MinXQuery\xspace}
\newtheorem{lemma}{Lemma}
\newtheorem{theorem}{Theorem}
\newtheorem{definition}{Definition}
\def\|{\verb|}
\def\mft{\text{{\scshape mft}}}
\def\qed{\hfill$\Box$}
\def\ft{\text{{\scshape ft}}}
\def\mtt{\text{{\scshape mtt}}}
\def\tt{\text{{\scshape tt}}}
\def\eval{{\sf eval}}
\def\EVAL{\text{{\scshape eval}}}
\newcommand\m[1]{\mathit{#1}}
\newcommand{\sem}[1]{[\![ #1 ]\!]}
\newcommand{\angl}[1]{\langle #1\rangle}
\newcommand\comp{\mathbin{\text{\raisebox{-0.5pt}{\parbox{3pt}{\textbb{.}\\[-7.5pt]\textbb{,}}}}}}
\begin{document}
%
\title{XQuery Streaming by Forest Transducers%
\ifFULL\thanks{This is the full version of the paper published in the proceedings of ICDE 2014.}
\fi 
}


\author{
\IEEEauthorblockN{Shizuya Hakuta}
\IEEEauthorblockA{DOCOMO Datacom, Inc.\\Japan}
\and
\IEEEauthorblockN{\ifFULL\else\ \ \fi Sebastian Maneth\ifFULL\else${}^*$\fi%
\thanks{\ifFULL\else${}^*$\fi
The present affiliation of this author is
        University of Edinburgh (UK).}
}
\IEEEauthorblockA{University of Oxford\\UK}
\and
\IEEEauthorblockN{Keisuke Nakano and Hideya Iwasaki}
\IEEEauthorblockA{
The University of Electro-Communications\\ Japan
}}


%

\maketitle

\begin{abstract}
Streaming of XML transformations is a challenging task and only
a few existing systems support streaming. Research approaches generally define
custom fragments of XQuery and XPath that are amenable to streaming, and then
design custom algorithms for each fragment. These languages have several
shortcomings. Here we take a more principled approach to the problem of
streaming XQuery-based transformations. We start with an elegant transducer
model for which many static analysis problems are well-understood: the
Macro Forest Transducer (MFT). We show that a large fragment of XQuery can
be translated into MFTs --- indeed, a fragment of XQuery, that can express important
features that are missing from other XQuery stream engines, such as GCX:
our fragment of XQuery supports XPath predicates and let-statements. 
We then use an existing streaming engine for MFTs and apply
a well-founded set of optimizations from functional programming such as
strictness analysis and deforestation.  
Our prototype achieves time and memory efficiency comparable to the fastest 
known engine for XQuery streaming, GCX.
This is surprising because our engine relies on the OCaml built in garbage collector
and does not use any specialized buffer management,
while GCX's efficiency is due to clever and explicit buffer management.
\end{abstract}



%
\ifFULL\maketitle\fi

\section{Introduction}
Data is often transmitted as a continuous stream; e.g.,
sensor readings such as weather data, or text messages such as news feeds.
Streams are also used to transmit large data that does not fit into memory.
Imagine now to query or \emph{transform} streamed data, and,  
that the data is tree structured (e.g. in XML or JSON). 
Doing this within limited memory is a challenging task.
Only a few systems support streaming of XML transformations.
These systems work on a ``best effort'' basis and try to use
as little memory as possible, but do not give guarantees on the amount
of memory used. XML transformations are conveniently expressed in XQuery and
XSLT. An example of a best-effort streaming engine for XSLT is 
Kay's Saxon~\cite{DBLP:journals/debu/Kay08}.

Research approaches generally proceed by defining custom fragments of XQuery and
XPath that are amenable to streaming, and then design custom algorithms
for each fragment. 
Examples are forward XPath of Olteanu's Spex~\cite{DBLP:journals/tkde/Olteanu07} and
the XQuery fragment of Koch, Scherzinger, and 
Schmidt's GCX~\cite{DBLP:conf/vldb/KochSS07}. GCX is the fastest XQuery streaming engine 
that we know.
These languages have several shortcomings:
they lack expressiveness needed for important transformations (for instance, they cannot express
XPath predicates). Since these languages have been designed specifically for streamability they
are difficult to analyze and less amenable to more general optimizations, other than those
optimizations specifically engineered for the streaming problems. For instance, their properties
with respect to sequential composition are not well-understood, hence they cannot easily be used in conjunction with
other transformations and filters outside of the fragment. 

In this paper, we take a different and more principled approach to the problem of
streaming XQuery-based transformations. We start with an elegant transducer model
that has been studied in the literature extensively, and for which many static analysis
problems are well-understood, namely the \emph{Macro Forest Transducer} 
of Perst and Seidl~\cite{DBLP:journals/ipl/PerstS04}. 
Known properties for this model include:
effective exact type checking, 
composition closure with many known classes of transformations,
decidable strictness analysis, 
decidable equivalence check for transformations of linear size increase, 
and the possibility to represent outputs succinctly using grammar-based compression.
We show that a large fragment of XQuery can be translated into MFTs --
indeed a fragment of XQuery, that can express the important features missing from the GCX
language, such as XPath predicates and let-expressions. We rely on a streaming 
execution engine for MFTs implemented in OCaml by Nakano and Mu~\cite{DBLP:conf/aplas/NakanoM06}.
We use a well-founded set of MFT-optimizations that are common in functional programming.
In particular, we apply deforestation~\cite{DBLP:journals/tcs/Wadler90}
which has been heavily studied and applied in functional programing languages like Haskell.
Thus, we have moved XQuery streaming from a ``one off'' problem to
a well-understood problem within transducer programming. 
Further, we can exploit feature of MFTs not present in our XQuery fragment.
For instance, MFTs naturally support recursive function definitions.
It is thus possible to translate a given XQuery program into an MFT, and to then
change the MFT by adding recursive definitions. Alternatively, it would be possible
to write a recursive MFT program which uses small XQuery programs in the right-hand
sides of its rules. Another convenient feature of MFTs is their ability to validate
the input, during transformation. This allows to check a XML Schema or Relax NG
in one pass during the streaming transformation.

Our contributions are summarized as:
\begin{itemize}
\item We formalize a translation from a fragment of XQuery into
macro forest transducers. We implemented a prototype system of
our translation.
\item We demonstrate the efficiency of our system and compare it experimentally
with GCX. In a nutshell, our system performs on par with GCX.
\item We present three different optimizations: two forms of parameter removal,
and the removal of stay moves. Together, they often induce a speedup of one
order of magnitude.
\item We prove bigO complexity of composition constructions for
MFTs. In the literature these constructions have exponential time complexity.
Using a particular feature of our MFTs (``stay moves'') we are able
to prove a quadratic time complexity. Roughly speaking, 
stay moves allow to compress intermediate rule trees.
\end{itemize}

Let us now explain in more detail some of the useful known properties about macro
forest transducers that we use. 
They can be applied to XQuery programs
thanks to our translation of XQuery into MFT.

(1) Streaming:
MFTs traverse an input forest by applying rules 
based on structural recursion.
Nakano and Mu~\cite{DBLP:conf/aplas/NakanoM06} show that
any MFT-style program can be streamed thanks to the 
structural recursion restriction.
Their streaming is based on
the composition of an MFT and an XML parsing transducer
in a way similar to that of a macro tree transducer and a top-down tree transducer.
They illustrate that the obtained transducer can be naturally implemented
as a pushdown machine which directly processes an input XML stream
and produces the output XML stream.
Their streaming approach has, however, an disadvantage that
it is hard for programmers to write an MFT-style program for XML transformation.
Our XQuery-to-MFT translation makes up for this shortcoming.
Programmers can now write transformations in the user-friendly query language 
XQuery instead of MFT-style programs, to obtain an XML stream processors.
Due to the expressiveness of MFTs, our XQuery streaming supports
even complex queries which contain nested loops and multiple variable accesses.

(2) Composition:
MFTs themselves are \emph{not} closed under composition. Several important subclasses
however are closed under composition. For instance, their restriction to linear size
increase are MSO definable and therefore are closed under composition; this follows from
Engelfriet and Maneth's result that macro tree transducers of linear size increase
are MSO definable~\cite{DBLP:journals/siamcomp/EngelfrietM03} and Maneth's result that 
the composition hierarchy of macro tree transducers
collapses for linear size increase~\cite{DBLP:conf/fsttcs/Maneth03}.
Note also that the linear size property is decidable for MFTs.
If the MFT does not use context-parameters then it is called a
``top-down forest transducer'' (FTs).
FTs are also not closed under composition, however, we show that
the composition of two FTs can be realized by one MFT.
We give an explicit construction for this result, and show that its time complexity is $O(|\Sigma||M_1||M_2|)$,
where $\Sigma$ is the size of the alphabet of element/attribute and text constants used by the
transducers.
This is possible in the presence of stay-moves (or by using DAG compressed right-hand sides); 
the classical constructions for composition of top-down tree transducers 
by Rounds~\cite{DBLP:journals/mst/Rounds70} 
and Baker~\cite{DBLP:journals/iandc/Baker79b} are in fact exponential in the size of 
the first transducer $M_1$. 

(3) Static Analyses:
A powerful feature of MFTs is inverse type inference: regular tree languages
are effectively preserved by inverses of MFT translations. 
This allows to perform exact type checking~\cite{DBLP:conf/pods/ManethBPS05,DBLP:conf/icdt/ManethPS07}.
It also allows to check the parameter strictness of the states of a transducer.
We use a simple version of strictness analysis in this paper in order to reduce
the number of context parameters of an MFT; this is simular to deaccumulation, see~\cite{DBLP:journals/jlp/GieslKV07}.
Parameter reduction allows to greatly improve
the efficiency of the MFTs that are obtained via our translation from XQuery.
Another important static analysis (not used here) is equivalence checking: if two MFTs
are of linear size increase then their equivalence 
is decidable~\cite{DBLP:journals/ipl/EngelfrietM06}.

{\bf Related Work.}\quad
Streaming of XPath has been studied extensively.
The fundamental article of Green et al.~\cite{DBLP:journals/tods/GreenGMOS04} shows how
to translate filter-less descendant/child-XPath queries into finite state word automata (DFAs).
The DFAs are executed top-down through the tree, using memory proportional to
the depth of the XML document tree.
Several systems are based on finite automata, such as Xfilter, XTrie, YFilter, PrefixFilter,
AFilter, and the XPush machine. Streaming of XPath queries that include filters is 
difficult, because candidate nodes that depend on a filter need to be stored in memory.
Recent work of Niehren et al. studies this in detail,
see, e.g.,~\cite{DBLP:journals/iandc/GauwinNT11,DBLP:conf/wia/GauwinN11,DBLP:conf/wia/DebarbieuxGNSZ13}.
Bar-Yossef, Fontoura, and Josifovski~\cite{DBLP:journals/jcss/Bar-YossefFJ07}
study theoretical bounds as well as practical streaming algorithms for XPath.
Shalem and Bar-Yossef~\cite{DBLP:conf/icde/ShalemB08} study twig-join algorithms
over XML streams. 

Early systems for XQuery streaming include the BEA processor~\cite{DBLP:journals/vldb/FlorescuHKLRWCS04},
FluxQuery~\cite{DBLP:conf/vldb/KochSSS04a}, and the Raindrop system~\cite{DBLP:journals/dke/WeiRML08}.
Streaming of XQuery based on physical algebra operators is presented in~\cite{DBLP:conf/icde/FernandezMSS07}.
There has been a transducer-based approach for streaming of XQuery~\cite{DBLP:conf/vldb/LudascherMP02},
but, the transducers are more restrictive than our macro forest transducer.
Streaming of XSLT has been considered by Dvorakova~\cite{DBLP:journals/informaticaSI/Dvorakova08} 
using tree transducers.
The transducers are similar to tree-walking tree transducers (see e.g.,~\cite{tw})
as they can use XPath expressions with forward and backward axes 
in the right-hand sides of their rules. For a given 
XSLT program, they present an analysis that attempts to find a smallest class of transducers
that captures the given transformation. The classes are distinguished by the number of passes and
the memory needed. No experimental evaluation of Dvorakova's work is available.
A mature commercial XSLT engine is SAXON by Kay~\cite{DBLP:journals/debu/Kay08}. 
Its performance is lower with respect to GCX and our engine, 
but, SAXON is not comparable to these systems because it implements the full
W3C standard of XSLT while GCX and our engine are proof-of-concept prototypes 
supporting restricted subsets of XQuery.
Note that the main purpose of the new XSLT 3.0 specification is to support 
streaming.
For this, new primitives and modes are introduced
for indicating the desire to stream.
A set of rules determines if the given program can indeed be streamed.
Their memory requirement is that ``not all input and output nodes are held in memory''. 

There are several works on streaming XML transformations
based on programming language theoretic approaches.
In a direction similar to streaming MFTs~\cite{DBLP:conf/aplas/NakanoM06},
Frisch and Nakano proposed stream processing for term rewriting systems (TRS)~\cite{DBLP:conf/planX/FrischN07},
which is more powerful than MFTs because of their Turing completeness.
However, it is still hard to write an XML transformation in TRS
and it requires careful programming for an efficient stream processing.
Kodama, Suenaga, and Kobayashi applied type theories to obtain stream processing~\cite{DBLP:journals/jfp/KodamaSK08}.
They employ an ordered linear type system for guaranteeing the possibility of streaming.
In both approaches,
the transformation must be written in their own programming languages 
instead of existing XML processing languages like XQuery and XSLT.

\section{Preliminaries}

An XML document represents an ordered unranked tree.
The nodes of this tree are of different types:
element nodes, text nodes, attribute nodes, processing instructions, etc.
Here we distinguish three types: element nodes, attribute nodes,
and text nodes. Our techniques easily extend to other types of nodes.
Element nodes have an arbitrary number of children and
are written in XML as
\begin{quote}\hspace*{-1em}
\|<|{\it elementname} $a_1$\|=|$v_1$ \dots ~$a_n$\|=|$v_n$\|>|\dots\|</|{\it elementname}\|>|
\end{quote}
where {\it elementname\/} is the name of the element node, 
$a_i$ are attribute names, $v_i$ are text values (given as strings between double quotes), 
and ``...'' (possibly) contains
further descendant nodes of this element node.
In our unranked tree model,
the first $n$ children of an element node 
are attribute nodes labeled $a_1,\dots,a_n$; 
each attribute node has exactly one child which is a text node labeled $v_i$.
Text nodes have no children, i.e., they are leaves of the tree;
they are labeled by a ``text content'' which is a
character sequence appearing in the XML document.
For instance, this XML snippet
\begin{verbatim}
  <book isbn="123" price="$99"><author>Knuth
  </author><title>Art of Programming</title>
  </book>
\end{verbatim}
represents the unranked tree in Figure~\ref{fig_1}.
It consists of a root node labeled ``book'' which has
four children,
labeled ``isbn'', ``price'', ``author'', and ``title'', respectively.
Each of these children nodes has exactly one child that is a
text node (labeled ``123'', ``\$99'', etc).
\begin{figure}[ht]
{\centerline{
\input 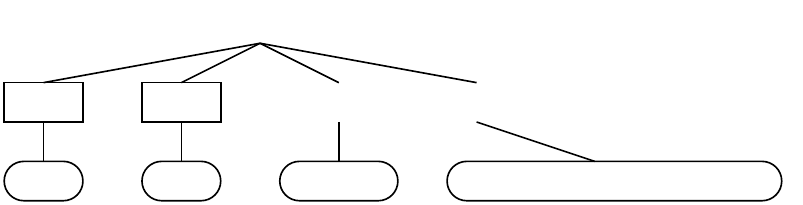_t
}}
\caption{An XML forest; text nodes are circled and attribute nodes
are boxed.}\label{fig_1}
\end{figure}
Formally, each node has a type \emph{and} a name. 
For us, both are part of the label, i.e., labels
are of the form (type,name).
We abstract from such pairs and assume that every node is
labeled by a word in $\mathbf{U}^+$. The set {\bf U}
is a fixed universal (finite) alphabet {\bf U} of characters.
For simplicity,
we represent text nodes as special element nodes with empty child.
We define \emph{XML forests} as sequences of unranked trees.

\begin{definition}\label{def:forest}
An \emph{XML forest} is a sequence $t_1 \cdots t_n$ where
$n\geq 0$ and $t_1,\dots,t_n$ are unranked trees.
An \emph{unranked tree} consists of a root node labeled by a word in $\mathbf{U}^+$
and a (possibly empty) sequence of subtrees.
The set of all XML forests is denoted by $\mathcal{F}$.
We often write a forest in term notation, i.e., 
generated by this EBNF:
\begin{bnf}
\abovedisplayskip=2pt
\belowdisplayskip=2pt
\begin{align*}
\nonterm{forest} \=&\mathrel{}\varepsilon \mid \nonterm{tree}\,\nonterm{forest}\\[-2pt]
\nonterm{tree} \=&\mathrel{}\nonterm{label}(\nonterm{forest})\\[-2pt]
\nonterm{label} \=&\mathrel{}\nonterm{\bf U}^+.
\end{align*}
\end{bnf}
\end{definition}

\subsection{\minXQuery: XQuery Fragment}

We consider a downward navigational fragment of XQuery, called \minXQuery.
XPath expressions in our queries may use the child, descendant, and following-sibling
axes. Filters may test the existence of a path, or may compare 
a text node or attribute value against a constant string value.
We do not allow where-clauses, ``ordered by''-statements, 
recursive function definitions, and joins.
Thus, \minXQuery expressions consist of nested for-loops and let-statements.
We do not discuss text predicates such as 
starts-with and contains; they are easy to support and 
are part of our prototype.

Figure~\ref{fig:minxquery-syntax} shows an EBNF
of \minXQuery fragment.
\begin{figure}[bt]
\begin{small}
  \centering
  \input{figure-minxquery-syntax}\vspace*{-\baselineskip}
  \caption{Syntax of \minXQuery}\label{fig:minxquery-syntax}
\end{small}
\end{figure}
Additional to the syntax defined in that figure we
impose the following restrictions on an XQuery program:
\begin{itemize}
 \item The input document is bound to a special variable
       named $\syntax{\term{\$input}}$.
 \item Every XPath expression starts with a variable 
       which has been introduced in the nearest enclosing
       \texttt{for} clause, or, 
       if no such \syntax{\term{for}} clause exists, with 
       the variable $\syntax{\term{\$input}}$.
\end{itemize}
\noindent
Although the second restriction disables to define join queries,
we can still write many practical queries 
by utilizing XPath predicates and nested loops.
We do not define the semantics of XQuery programs here;
see, e.g.,~\cite{DBLP:conf/afp/Wadler02}.
As an example consider the following \minXQuery program:
\begin{verbatim}
  for $v1 in $input/descendant::a return
    for $v2 in $v1/descendant::b return
      let $v3 := $v2/descendant::c return
      let $v4 := $v2/descendant::d return
      ($v1,$v2,$v3,$v4)
\end{verbatim}
Consider the evaluation of this program on the
following input document:
\begin{verbatim}
  <doc><a><b><c><c/></c><d/><d/></b>
  <b><d/></b></a></doc>
\end{verbatim}
Let us refer to the first $b$-node in the document by $b_1$,
and to the second one (in preorder) by $b_2$, etc.
The first sequence of subtrees that are output
are those rooted at $a_1$, $b_1$, $c_1$, $c_2$, $d_1$, and $d_2$, respectively.
Another sequence of subtrees is also output,
whose roots are $a_1$, $b_2$, and $d_3$.

\subsection{Macro Forest Transducers}

A forest transducer is a finite state machine that takes as
input an XML forest and produces as output an XML forest.
Recall that each node of an XML forest is labeled by a non-empty
word over {\bf U}.
In our transducers we abstract from {\bf U}-characters
forming the label of a node by fixing a finite set $\Sigma$ of words in $\text{\bf U}^+$ that
are of interest to the transducer.
We refer to elements of $\Sigma$ as ``symbols''.
A rule for the state of a transducer tests if the current input node is 
labeled by a given symbol $\sigma\in\Sigma$.
A state also has a ``default rule'' which is applicable if no other rule
of that state applies; the default rule applies to any 
${\mathbf U}^+$-labeled node. 
The parsing of an input forest according to the EBNF given in
Definition~\ref{def:forest} also provides us with the information whether
the empty forest $\varepsilon$ is reached.
For instance, the parsing of the forest $a(b())$ is parsed as 
$a(b(\varepsilon)\varepsilon)\varepsilon$. Our transducers may use this
information on occurrences of $\varepsilon$: for each state we require
that the transducer has a rule for the input $\varepsilon$.

We define the fixed set $Y=\{y_1,y_2,\dots\}$ of \emph{context parameters},
also called \emph{accumulating parameters}, or simply \emph{parameters}.
A \emph{ranked set} is a set together with a mapping that associates
to each element of the set a non-negative number, called the rank of that
element. For a ranked
set $Q$ we denote by $Q^{(k)}$ the subset of symbols which have rank~$k$. 

\begin{definition}\normalfont
Let $\Sigma\subseteq\text{\bf U}^+$ be a finite set of symbols
and let $\%t$ be a special symbol not in $\Sigma$.
A \emph{macro forest transducer} $M$ over $\Sigma$ 
is a tuple $(Q,\Sigma,q_0,R)$ where 
$Q$ is a finite ranked set of \emph{states}, each 
of rank $\geq 1$.
The \emph{initial state} $q_0$ is in $Q^{(1)}$.
Let $q\in Q^{(m+1)}$ with $m\geq 0$.
For every input symbol $\sigma\in\Sigma$
the set $R$ contains at most one \emph{(q,$\sigma$)-rule} 
of the form 
\[
q(\sigma(x_1)x_2, y_1, \dots, y_m)\to r\text{,}
\]
where $r$ is a forest over $\Sigma\cup Q\cup X\cup Y_m$ with
$X=\{x_0,x_1,x_2\}$,
$Y_m=\{y_1,\dots,y_m\}$, and the properties that
a leaf has a label in $X$  
if and only if it is the first child of a $Q$-node, and parameters
in $Y_m$ may only appear at leaves.
Variables $x_0$, $x_1$ and $x_2$ bind 
the current node with the rest of the stream,
the children of the current node, and
the rest of the stream, respectively.
Additionally, $R$ contains exactly one rule of each
of the following two kinds:
(1)
a \emph{$\varepsilon$-rule} of the form
\[
q(\varepsilon, y_1, \dots, y_m)\to r\text{,}
\]
where $r$ is as for $(q,\sigma)$-rules, but with 
$X=\{x_0\}$;
(2) a \emph{default rule} of the form
\[
q(\%t(x_1)x_2, y_1, \dots, y_m)\to r\text{,}
\]
where $r$ is as for $(q,\sigma)$-rules but where
binary nodes may be labeled $\%t$.
\qed
\end{definition}

Note that our transducers are deterministic by definition,
i.e., for a state at a given input node, at most one rule is applicable.
Note also that they are total and define an output for any arbitrary
given input forest, due to the presence of the default rules.

Let $M=(Q,\Sigma,q_0,R)$ be an MFT. 
A call of the from $q'(x_0)$ in the right-hand side of
a rule of $M$ is called ``stay move''. Note that stay
moves can  give rise to non-terminating computation.
For instance, the rule
$q(\varepsilon)\to q(x_0)$ if called on the leaf
$\varepsilon$ does not terminate.
We do not further formalize termination but refer the
reader to Section~5.2 of~\cite{DBLP:journals/acta/EngelfrietM03} 
where this is discussed for a formalism similar to MFTs,
namely for deterministic pebble macro tree transducers.
We only deal with terminating MFTs; all our constructions
operate on terminating MFTs, and are guaranteed to construct
terminating MFTs.
Therefore, we always mean ``terminating MFT'' from now on,
when we speak about MFT.

We define the semantics of the MFT $M$.
For a given input XML forest $f$, $M$'s output 
denoted $\sem{M}(f)$ is defined as $\sem{q_0}(f)$. 
Every state $q\in Q^{(m+1)}$, $m\geq 0$ realizes the function 
$\sem{q}:\mathcal{F}^{m+1}\to\mathcal{F}$ defined recursively 
for forests $g_0,f_1,\dots,f_m\in\mathcal{F}$ as
\[
\sem{q}(g_0,f_1,\dots,f_m)=\sem{r}
\]
where $r$ is the right-hand side of the unique rule 
that is applicable:
(i) if $g_0=\varepsilon$ then $r$ is the right-hand side 
of $q$'s $\varepsilon$-rule.
(ii) If $g_0=\sigma(g_1)g_2$ for forests $g_1,g_2$, then 
$r$ is the right-hand side of the $(q,\sigma)$-rule of $M$,
if it exists, and otherwise is the right-hand side of the
default rule of $q$.
The tree $\sem{r}$ is defined inductively as:
$\sem{\varepsilon}=\varepsilon$, $\sem{y_j}=f_j$ if $j\in\{1,\dots,m\}$, and 
$\sem{q'(x_i,u_1,\dots,u_n)}=\sem{q'}(g_i,\sem{u_1},\dots,\sem{u_n})$
if $q'\in Q^{n+1}$, $i\in\{0,1,2\}$, 
and $u_1,\dots,u_n$ are subtrees in $r$.

Let us consider this example query $P_{\text{person}}$:
\begin{verbatim}
 <out>{ for $b in
        $input/person[./p_id/text() = "person0"] 
        return let $r := $b/name/text()
        return $r }</out>
\end{verbatim}
The query selects all text-node children of any name-node, 
that is child of a person node, and that person-node has a 
p\_id-child with text node child of content ``person0''.
Here are the rules of the MFT $M_{\text{person}}$ of this example; it outputs
a root-node ``out'' where the children are the results of the above
query.
\[
\begin{array}{lcl}
q_0(\%(x_1)x_2)&\to&\text{out}(q_1(x_0))\\

q_1(\text{person}(x_1)x_2)&\to& q_2(x_1,q_4(x_1))\, q_1(x_2)\\
q_1(\%t(x_1)x_2)&\to& q_1(x_1)\, q_1(x_2)\\

q_2(\text{p\_id}(x_1)x_2,y_1)&\to& q_3(x_1,y_1,q_2(x_2,y_1))\\
q_2(\%t(x_1)x_2,y_1)&\to& q_2(x_2,y_1)\\

q_3(\text{person0}(x_1)x_2,y_1,y_2)&\to& y_1\\
q_3(\%t(x_1)x_2,y_1,y_2)&\to& q_3(x_2,y_1,y_2)\\
q_3(\varepsilon,y_1,y_2)&\to& y_2\\

q_4(\text{name}(x_1)x_2)&\to& q_5(x_1)\, q_4(x_2)\\
q_4(\%t(x_1)x_2)&\to& q_4(x_2)\\

q_5(\%t_{\rm text}(x_1)x_2)&\to& \%t(\varepsilon)\,q_5(x_2)\\
q_5(\%t(x_1)x_2)&\to& q_5(x_2)\\
q_i(\varepsilon)&\to&\varepsilon\quad\text{for }i\in\{1,2,4,5\}
\end{array}
\]
where the pattern $\%t_{\rm text}(x_1)x_2$ matches only text nodes,
thereby $x_1$ should bind $\varepsilon$ for `normal' inputs.
Let us run the MFT $M_{\text{person}}$ on the XML tree for this document:
\begin{verbatim}
  <person><p_id><a/>person0</p_id><name>Jim
  </name><c/><name>Li</name></person>
\end{verbatim}
We want to compute 
$\sem{q_0}(\text{person}(t_1\,t_2\,t_3\,t_4\, \varepsilon)\,\varepsilon)$ 
where $t_1$ is a tree with root-node labeled p\_id,
$t_2,t_4$ are name-nodes, and $t_3$ is a $c$-leaf.
According to the first rule we obtain 
$\text{out}(\sem{q_1}(\text{person}(f)\,\varepsilon))$,
where $f$ is the forest $t_1\, t_2\, t_3\, t_4\, \varepsilon$.
We apply the $(q_1,\text{person})$-rule to obtain
\[
\sem{q_2}(f,\sem{q_4}(f))\,\sem{q_1}(\varepsilon)\text.
\]
We remove the $q_1$-call because it produces $\varepsilon$.
Let $t_1=\text{p\_id}(f_1)$ and $f_2=t_2\,t_3\,t_4\,\varepsilon$. 
According to the $(q_2,\text{p\_id})$-rule, we obtain
$\sem{q_3}(f_1,\sem{q_4}(f),\sem{q_2}(f_2,\sem{q_4}(f)))$
Let $f_1=\text{a}(\varepsilon)\,\text{person0}(\varepsilon)\,\varepsilon$,
we apply the default rule for $q_3$ to obtain
\[
\sem{q_3}(\text{person0}(\varepsilon)\,\varepsilon,\sem{q_4}(f),\sem{q_2}(f_2,\sem{q_4}(f)))\text.
\]
Then we apply 
the $(q_3,\text{person0})$-rule to obtain $\sem{q_4}(f)$.
Let $t_2=\text{name}(f_3)$ and $t_4=\text{name}(f_4)$.
Since $f=\text{p\_id}(f_1)\,\text{name}(f_3)\,\text{c}(\varepsilon)\,\text{name}(f_4)\,\varepsilon$,
we apply
the $q_4$-rules to obtain
$\sem{q_5}(f_3)\,\sem{q_5}(f_4)\,\varepsilon$.
The forest $f_3$ only consists of the ``Jim'' text node,
which is thus output by $q_5$.
Similarly, the $q_5$-call is applied to the other text in $f_4$
and output it.
Thereby our final output is $\text{out}(\text{Jim}\,\text{Li})$
where the empty forest $\varepsilon$ is omitted.
Note that in XML documents it is not possible to have two text
nodes that are direct siblings of each other.
Thus, our MFT processor for this example outputs
\verb!<out>JimLi</out>!.

It is interesting to observe why the state $q_3$ of the transducer $M_{\text{person}}$
uses two parameters $y_1$ and $y_2$.
This is done to simulate the
existential semantics of XPath filters (the two parameters are used as two branches
of a if-then-else statement).
To see this, consider the input
\begin{verbatim}
  <person><p_id><a/>perso7</p_id><name>Jim
  </name><c/><p_id>person0</p_id></person>
\end{verbatim}
After some initial steps (as before) we obtain
\[
\sem{q_3}(\text{perso7}(\varepsilon)\,\varepsilon,\sem{q_4}(f'),\sem{q_2}(f_2',\sem{q_4}(f')))\text.
\] 
Since the filter is not true at this point (because ``perso7'' is
not equal to ``person0''), we need to check if further 
\text{p\_id}-children of the person-node satisfy the filter. 
We carefully prepared for this event, by supplying $q_3$'s initial
call with a second parameter that contains 
a $q_2$-call to the p\_id-siblings.
Thus, when $q_2$ meets the $\varepsilon$-leaf of perso7
(viz. we know that the filter is false here)
it selects its second parameter, thus, we obtain
$\sem{q_2}(f_2',\sem{q_4}(f'))$ and proceed correctly.


{\bf Size of a Transducer.}\quad
We define the size $|M|$ of the MFT $M$ as 
$|\Sigma|$ plus the sum of sizes of all left-hand sides
and right-hand sides of $M$'s rules. The size of
a forest is defined as the number of its nodes.

\section{From \minXQuery to MFT}

We now discuss the compilation of a \minXQuery program $P$ to the MFT $M_P$.
First, let us define a shorthand notation. For a forest $f$
that is restricted as the default rule of an MFT, but with the additional
restriction that $x_1,x_2$ do not appear, 
we denote by $q(\%,y_1,\dots,y_m)\to f$ the two rules
\[
\begin{array}{lcl}
q(\%t(x_1)x_2,y_1,\dots,y_m)&\to&f\\
q(\varepsilon,y_1,\dots,y_m)&\to&f\\
\end{array}
\]
We define $M_P=(Q,\Sigma,q_0,R)$ where $\Sigma$ consists of 
all element labels and string constants that appear in $P$.
For instance, for our example program $P_{\text{person}}$,
$\Sigma$ consists of person, p\_id, person0, and name.
The initial state $q_0$ of the MFT has the two rules induced by:
\[
q_0(\%)\to q_0'(x_0,q_{\text{copy}}(x_0))
\]
where $q_0',q_{\text{copy}}$ are states in $Q$ of ranks $2$ and
$1$, respectively.
The state $q_{\text{copy}}$ realizes the identity mapping on 
forests, via the rules
\[
\begin{array}{lcl}
q_{\text{copy}}(\%t(x_1)x_2)&\to&\%t(q_{\text{copy}}(x_1))q_{\text{copy}}(x_2)\\
q_{\text{copy}}(\varepsilon)&\to&\varepsilon.
\end{array}
\]
Our compilation functions are defined recursively on the
structure of $P$, and return sets of rules.
The compilation of any (sub)-expression of $P$ is done in 
the context of a mapping $\rho$ and a state $q\in Q$.
The mapping $\rho$ is of the form $\rho=\{(v_1,1),\dots,(v_n,n)\}$, where 
$v_i$ are variable names appearing in the \minXQuery program.
The state $q\in Q$ is the current state for which rules are
defined by the compilation. 
We define $\rho_0=\{($\|$input|$,1)\}$ and issue the call
$\mathcal{T}(P,\rho_0,q_0')$ as initial call to 
the compilation function $\mathcal{T}$.

We define the compilation function $\mathcal{T}$ recursively.
Let $e,e',e_1,\dots,e_n$ be arbitrary \minXQuery expressions,
$\rho$ a mapping, $q\in Q$ of rank $m+1$, $m\geq 0$,
and $p$ an XPath expression (as defined by the nonterminal $\text{\it ordpath}$
in the EBNF of Figure~\ref{fig:minxquery-syntax}).

If $e=e_1\cdots e_n$ then let $q_1,\dots,q_n$ be new states in $Q$ of
rank $m+1$ and define 
$\mathcal{T}(e,\rho,q)=\{r\}\cup\mathcal{T}(e_1,\rho,q_1)\cup\dots
\cup\mathcal{T}(e_n,\rho,q_n)$ where $r$ is the rule
$q(\%,y_1,\dots,y_m)\to q_1(x_0,y_1,\dots,y_m)\cdots q_n(x_0,y_1,\dots,y_m)$.

If $e=\|<|\sigma\|>| e'\|</|\sigma\|>|$ with $\sigma\in\Sigma$ 
then let $q'$ be a new state in $Q$ of
rank $m+1$ and define $\mathcal{T}(e,\rho,q)=\{r\}\cup\mathcal{T}(e',\rho,q')$ 
where $r$ is the rule
$q(\%,y_1,\dots,y_m)\to\sigma(q'(x_0,y_1,\dots,y_m))$.

If $e=\sigma$ (i.e., $\sigma$ is a string constant) then define
$\mathcal{T}(e,\rho,q)=\{q(\%,y_1,\dots,y_m)\to\sigma(\varepsilon)\}$.

If $e=\|$|v$ where $\|$|v$ is a variable name, then define
$\mathcal{T}(e,\rho,q)=\{q(\%,y_1,\dots,y_m)\to y_{\rho(\|$|v)}\}$.

If $e=\|for|~\|$|v~\|in|~p ~ e'$ then
let $q'$ be a new state in $Q$ of
rank $m+2$ and define 
$\mathcal{T}(e,\rho,q)=\mathcal{T}(e',\rho',q')
\cup\mathcal{F}(p,q,q')$
where $\rho'=\rho\cup\{(\|$|v,m+1)\}$
and $\mathcal{F}(p,q,q')$ is defined below.

If $e=\|let|~\|$|v\|:=|e_v ~ e'$ then
let $q_v,q'$ be new states in $Q$ of rank $m+1$ and $m+2$, respectively.
Define 
$\mathcal{T}(e,\rho,q)=\{r\}\cup\mathcal{T}(e_v,\rho,q_v)\cup\mathcal{T}(e',\rho',q')$
where $\rho'=\rho\cup\{(\|$|v,m+1)\}$ and
$r$ is the rule $q(\%,y_1,\dots,y_m)\to 
q'(x_0,y_1,\dots,y_m,q_v(x_0,y_1,\dots,y_m))$.

If $e=p$ with an XPath expression $p$,
then 
let $q'$ be a new state in $Q$ of
rank $m+2$ and define 
$\mathcal{T}(e,\rho,q)=\{r\}\cup\mathcal{F}(p,q,q')$
where $r$ is the rule
$q'(\%,y_1,\dots,y_{m+1})\to y_{m+1}$.

The rules in $\mathcal{F}(p,q,q')$ are defined so that
\begin{equation}
\begin{array}{l}
\sem{q}(t\,\m{s}, u_1,\dots,u_m) =\\
\quad
\sem{q'}(t_1\,\m{s}_1, u_1,\dots,u_m,t_1)
\dots
\sem{q'}(t_n\,\m{s}_n, u_1,\dots,u_m,t_n)
\end{array}
\label{eqn:pathffor}
\end{equation}
where $t_1,\dots,t_n$ are all subtrees of $t$,
in pre-order,
that satisfy the XPath $p$ relative to the root of $t$,
and $\m{s}_1,\dots,\m{s}_n$ are the sequences of their following siblings.

Let us give a definition of $\mathcal{F}(p,q,q')$
for an XPath $p$,
two states $q$ and $q'$ with rank $m$ and $m+1$, respectively,
so that equation~(\ref{eqn:pathffor}) holds.
We first show the case where $p$ contains no predicate.
We obtain a total deterministic finite automaton~(DFA)
from the XPath $p$ in the usual way.
We only discuss child and descendant axes.
This translation is described 
by Green et al~\cite{DBLP:journals/tods/GreenGMOS04}. 
The cases for sequences of following-sibling axes is similar.
Without loss of generality,
the initial state of the DFA has no incoming transition.
The set $\mathcal{F}(p,q,q')$ consists of rules
each of which corresponds to a transition
$q_1\stackrel{a}{\longrightarrow}q_2$ of the DFA.
When $q_1$ is not initial and $q_2$ is not final,
$\mathcal{F}(p,q,q')$ contains a rule
$q_1(a(x_1)x_2,y_1,\dots,y_m)\to 
q_2(x_1,y_1,\dots,y_m) q_1(x_2,y_1,\dots,y_m)$.
When $q_1$ is initial and $q_2$ is not final
the set has a rule
$q(a(x_1)x_2,y_1,\dots,y_m)\to q_2(x_1,y_1,\dots,y_m)$.
When $q_1$ is not initial and $q_2$ is final,
the set has a rule
$q_1(a(x_1)x_2,y_1,\dots,y_m)\to 
q'(x_0,y_1,\dots,y_m,a(q_{\m{copy}}(x_1)))$.
When $q_1$ is initial and $q_2$ is final,
$q(a(x_1)x_2,y_1,\dots,y_m)\to 
q'(x_1,y_1,\dots,y_m,a(q_{\m{copy}}(x_1)))$.

Next we show the case where the XPath $p$ contains predicates.
We fist construct a set of rules in a way similar to the above
ignoring all predicates.
If a step in the XPath $p$ has a predicate $p'$,
we modify rules corresponding
to the transition for the step in the DFA.
For example, when $p=\|$|v\|//a[|p'\|]/b/c|$,
we modify the rule for the transition $q_1\stackrel{a}{\longrightarrow}q_2$ of the DFA
using another DFA obtained from the predicate XPath $p'$.
Before the modification,
we introduce a state $q_{p'}$ in the translated MFT
so that $\sem{q_{p'}}(t~ts,u_1,u_2) = u_1$
if the predicate XPath $p'$ is true for $t$ relative to the root of $t$,
and $\sem{q_{p'}}(t~ts,u_1,u_2) = u_2$ otherwise.
The set of rules for $q_{p'}$ is obtained in a way similar
to regular lookahead removal in macro tree transducers~\cite{DBLP:journals/jcss/EngelfrietV85}.
We use this state $q_{p'}$ for the modification of rules.
For example,
suppose that we obtain the following rules for $q_1$ by ignoring predicates:
\begin{align*}
&q_1(a(x_1)x_2,y_1,\dots,y_m)
\\&\quad\to q_2(x_1,y_1,\dots,y_m)~ q_1(x_2,y_1,\dots,y_m) 
\\&q_1(\%t(x_1)x_2,y_1,\dots,y_m) 
\\&\quad\to q_3(x_1,y_1,\dots,y_m)~ q_1(x_2,y_1,\dots,y_m)
\end{align*}
Then we modify the first rule as follows.
\begin{align*}
&q_1(a(x_1)x_2,y_1,\dots,y_m) 
\\&\quad\to
q_{p'}(x_1,\!\!\begin{array}[t]{l}
  q_2(x_1,y_1,\dots,y_m),\\
  q_3(x_1,y_1,\dots,y_m))~ q_1(x_2,y_1,\dots,y_m)
\end{array} 
\end{align*}

Let us summarize our translation using the example program $P_{\text{person}}$.
First, an MFT rule $q_0(\%)\to q_1(x_0, q_{\text{copy}}(x_0))$ is generated
for the initial state $q_0$, where $q_1$ corresponds to an expression 
$e=$\| <out>{|$e_{\text{for}}$\|}</out>|.
The accumulating parameter of $q_1$ is introduced for the \|$input| variable,
which is not used as an output hence it will be eliminated
in the further optimization.
For the $q_1$ state,
an MFT rule $q_1(\%,y_1)\to\text{out}(q_2(x_0,y_1))$ is generated 
by $\mathcal{T}(e,\rho,q_1)$ with $\rho=\{($\|$input|$,1)\}$.
For more MFT rules,
we compute $\mathcal{T}(e_{\text{for}},\rho,q_2)=
\mathcal{T}(e_{\text{let}},\rho',q_3)\cup\mathcal{F}(p,q_2,q_3)$
where $e_{\text{for}}=$\|for $b in|~$p$~\|return|~$e_{\text{let}}$
and $\rho'=\rho\cup\{($\|$b|$,2)\}$.
MFT rules for the $q_3$ state are obtained by the further computation
so that the output of $q_3$ is the results of the $e_{\text{let}}$ expression
for the current node.
The computation of $\mathcal{F}(p,q_2,q_3)$ generates
MFT rules for the $q_2$ state
which collect a sequence of results of $q_3$ at the path $p$.
After the whole translation, an MFT with 14 states is finally generated.
By the parameter reduction discussed in Section~\ref{sect:opt},
the MFT $M_{\text{person}}$ will be obtained.

The size $|P|$ of a \minXQuery program $P$ is defined as the number of nodes
in its parse tree (according to our EBNF in Figure~\ref{fig:minxquery-syntax}).

\begin{theorem}
Given a \minXQuery program $P$, the MFT $M_P$ 
is constructed in time $O(|P|)$.
For every XML forest $f$ it holds that
$\sem{M_P}(f)=\sem{P}(f)$.
\end{theorem}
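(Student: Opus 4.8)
The plan is to prove the two assertions separately, treating the complexity bound as routine and concentrating the effort on the semantic equality.

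\emph{Complexity.} The construction of $M_P$ is a single recursive descent through the parse tree of $P$, so I would argue by structural induction on $P$ that each clause of the definition of $\mathcal{T}$ creates a bounded number of fresh states and rules and recurses only on proper sub-expressions; the only non-constant contributions come from computing $\Sigma$ (one scan of $P$) and from the two clauses that invoke $\mathcal{F}(p,q,q')$ on an XPath $p$. For a predicate-free downward XPath, the DFA of Green et al.\ for a single linear path has $O(|p|)$ states and transitions and is built in time $O(|p|)$, and $\mathcal{F}$ emits one rule per transition; a predicate $p'$ on a step adds, by the same reasoning, $O(|p'|)$ rules for the lookahead state $q_{p'}$ plus a fixed modification of finitely many rules. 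Since the sizes of all XPath expressions in $P$ sum to at most $|P|$, the construction runs in time $O(|P|)$.

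\emph{Correctness: set-up.} For $\sem{M_P}(f)=\sem{P}(f)$ I would prove a generalized statement by structural induction on the sub-expressions of $P$, after two preparatory lemmas. (i) $\sem{q_{\mathrm{copy}}}(g)=g$ for every forest $g$, by an immediate induction on $g$. (ii) A \emph{path lemma}: for every XPath $p$ and the fresh states $q,q'$ generated for it, the rules $\mathcal{F}(p,q,q')$ satisfy the defining equation~(\ref{eqn:pathffor}); and the lookahead state $q_{p'}$ built for a predicate $p'$ satisfies $\sem{q_{p'}}(t\,\m{s},u_1,u_2)=u_1$ if $p'$ holds at the root of $t$ and $=u_2$ otherwise. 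The generalized induction hypothesis I would carry is: for every sub-expression $e$ of $P$ with free variables among $v_1,\dots,v_m$, every mapping $\rho=\{(v_1,1),\dots,(v_m,m)\}$ respecting \minXQuery's scoping (so $v_m$ is the nearest enclosing \texttt{for}-variable whenever one exists), and every fresh state $q$ of rank $m+1$, the rules produced by $\mathcal{T}(e,\rho,q)$ make $\sem{q}(t\,\m{s},u_1,\dots,u_m)$ equal to the value of $e$ in the XQuery environment $v_i\mapsto u_i$ with context node the root of $t$. The structural invariant that makes ``context node'' well defined here is precisely the \minXQuery restriction: whenever the transducer is running in a state compiled inside a \texttt{for} over $v_m$, its current input node is exactly the node to which $v_m$ is currently bound, so paths of the form $\$v_m/\cdots$ are navigated from that node and predicate paths $.\,/\cdots$ are evaluated against it.

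\emph{Correctness: the induction and the obstacle.} With this in place each clause of $\mathcal{T}$ reduces to a one-line computation: a sequence unfolds into the concatenation of the sub-state calls; an element wraps the recursive call in $\sigma(\cdot)$; string and variable are immediate; \texttt{let} plugs $\sem{q_v}(\cdot)=\sem{e_v}$ into the extra parameter of $q'$ and applies the hypothesis to $e'$ under $\rho'$ (the context node and nearest \texttt{for}-variable are unchanged, so the invariant persists); \texttt{for} uses the path lemma to rewrite $\sem{q}$ as the concatenation of $\sem{q'}$ over the matched subtrees $t_i$, each of which is also placed, as a genuine subtree, in the new parameter by the $a(q_{\mathrm{copy}}(x_1))$ gadget, after which the hypothesis on $e'$ applies with $\$v\mapsto t_i$ and context node the root of $t_i$; the bare-path query case is the path lemma followed by the rule $q'(\%,\dots,y_{m+1})\to y_{m+1}$, which returns exactly the matched subtrees. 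Specializing to $e=P$, $\rho=\rho_0$, $q=q_0'$ and using $q_0(\%)\to q_0'(x_0,q_{\mathrm{copy}}(x_0))$ together with Lemma~(i) yields $\sem{M_P}(f)=\sem{q_0'}(f,f)=\sem{P}(f)$. I expect the genuine difficulty to lie entirely in the path lemma, and within it the predicate case: one must verify that DFA runs correspond faithfully to $p$-matches, that descending into $x_1$ and continuing along $x_2$ thread the preorder enumeration and the following-sibling sequences correctly (including the case, glossed over by the informal rule description, where a DFA state is simultaneously final and has outgoing transitions such as the self-loop induced by a \texttt{descendant} step), and that wrapping the matched-node action in $q_{p'}(x_1,\text{matched},\text{unmatched})$ faithfully simulates the \emph{existential} semantics of XPath filters --- the subtlety exhibited by the two-parameter state $q_3$ in the $M_{\mathrm{person}}$ example. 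Everything outside the path lemma is bookkeeping driven by the scoping invariant.
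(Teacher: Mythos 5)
The paper states this theorem without any proof, so there is nothing to compare your proposal against; judged on its own, your plan is the proof one would actually write, and it is organized around the right invariants. The two preparatory lemmas (the identity property of $q_{\mathrm{copy}}$ and the path lemma establishing equation~(\ref{eqn:pathffor}) together with the two-parameter if--then--else behaviour of the lookahead states $q_{p'}$), the generalized induction hypothesis tying a state of rank $m+1$ to a sub-expression evaluated in the environment carried by its $m$ parameters, and the scoping invariant that the current input of such a state is $t\,\m{s}$ for $t$ the binding of the nearest enclosing \texttt{for}-variable are exactly what is needed to make the per-clause verification of $\mathcal{T}$ go through, and your final assembly via $q_0(\%)\to q_0'(x_0,q_{\mathrm{copy}}(x_0))$ is correct. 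You are also right that all the real work is in the path lemma, and you correctly flag that the paper's description of $\mathcal{F}$ is incomplete there (e.g.\ it omits the sibling/descendant continuations when a DFA state is final and still has outgoing transitions, as the $M_{\text{person}}$ example shows are needed); a full proof would have to repair that. One caveat on the complexity half: the claim that the DFA of Green et al.\ for a single linear path has $O(|p|)$ states is not automatic once \texttt{descendant} steps are combined with wildcards (the subset construction can blow up), so the $O(|P|)$ bound either needs a restriction on the paths, a lazy/on-the-fly representation of the automaton, or an explicit argument that the reachable subsets stay linear; since the paper asserts the bound without justification, this is a gap you inherit rather than introduce, but it is the one step in your outline that is not routine.
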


\section{Optimizations}
\label{sect:opt}

\subsection{Parameter Reduction}
Our translation generates a transducer which
includes many redundant parameters in general.
They should be eliminated as much as possible
because the number of parameters has a serious effect on
efficiency of streaming of the obtained transducers.
In an extreme case, we can eliminate all parameters.
This will also help us to apply composition laws discussed in
Section~\ref{sect:comp}.
In this section, we suppose that 
the index position of arguments (or parameters) of states is starting with zero,
e.g., the parameter $y_2$ in $q(x,y_1,y_2)$ is called the second parameter.

In our implementation,
we eliminated two kinds of parameters:
{\em unused\/} parameters and {\em constant\/} parameters.
Additionally, we eliminated parameters by removing stay moves
and unreachable states.
Since the optimizations may interact,
we apply them repeatedly.

{\bf Unused parameter reduction.}\quad
An unused parameter is one that does not appear in 
the output, for any given input.
For example,
if we have five rules for the states $q$ and $q'$
\[\begin{array}{lcl}
q(\sigma(x_1)x_2,y_1,y_2)
&\to& \delta(q'(x_2,y_1,y_2))
\\
q(\%t(x_1)x_2,y_1,y_2)
&\to& \%t(q'(x_2,\delta(y_2),\sigma(y_2)))
\\
q(\varepsilon,y_1,y_2)
&\to& \sigma(y_2)
\\
q'(\%t(x_1)x_2,y_1,y_2)
&\to&
q(x_1,\varepsilon,y_1)
\\
q'(\varepsilon,y_1,y_2)
&\to&
\varepsilon
\text,
\end{array}\]
then the parameters $y_1$ of $q$ and $y_2$ of $q'$ are unused
because they never contribute to outputs of the transducer.
The second parameter $y_2$ of $q$ is obviously used for output
because of the third rule.
From this fact, the parameter $y_1$ of $q'$ may also be used
because it will be passed to $q$ as the second argument in the fourth rule.
A set of unused parameters are obtained by finding all necessary parameters
in the following algorithm.
Let us call a {\em bare occurrence\/} of $y_i$ in $e$
when $y_i$ occurs in $e$ but not in an argument of a state call in $e$.
The algorithm collects all necessary parameters
as a set $S\subseteq U$ with $U=\{(q,i)\mid q\in Q,~ 1\leq i\leq\m{rank}(q)-1\}$
so that $(q,i)\in S$ implies
that the $i$-th parameter of state $q$ appears in outputs.
{\small
\[\begin{array}{l}
\quad S := \{ (q,i) \mid\!\!
             \begin{array}[t]{l}
             \text{$y_i$ is a bare occurrence in the right-hand}\\
             \text{side of $q$ rule.}\} \end{array}\\
\quad \mathbf{until}~ \text{$S$ is no longer updated} ~\mathbf{do}\\
\quad\quad  S := S \cup \{ (q,i) \mid\!\!
                      \begin{array}[t]{l}
                      (q',i')\in S,~\\
                      \text{$e$ is the $i'$-th argument of a $q'$ call}\\
                      \text{in the right-hand side of $q$ rule},\\
                      \text{$y_i$ is a bare occurrence in $e$}\} \end{array}\\
\quad \mathbf{end}
\end{array}\]
}
This procedure always terminates because of finiteness of $U$.
Obviously, $U\setminus S$ is a set of unused parameters.
For each $(q,i)\in U\setminus S$,
the parameter $y_i$ can be eliminated from the left-hand side of the $q$ rules.
We also remove the $i$-th argument of the $q$-call in the right-hand sides of all rules.


{\bf Constant parameter reduction.}\quad
A constant parameter is a parameter which is always instantiated by
the same constant forest.
This can be found by checking
whether the parameter of the state in the right-hand sides of all rules is
either the specific constant forest
or the parameter of the same state in the left-hand side.
For example, let $f$ be an XML forest and consider rules
\[\begin{array}{lcl}
q(\sigma(x_1)x_2,y_1,y_2)
&\to& q(x_1,\varepsilon,y_2)~ \delta(q'(x_2,y_2))
\\
q(\%t(x_1)x_2,y_1,y_2)
&\to& q(x_1,y_1,y_2)~ \%t(q'(x_2,\delta(y_2)))
\\
q(\varepsilon,y_1,y_2)&\to& y_1
\\
q'(\%t(x_1)x_2,y_1)&\to& \delta(q(x_1,\varepsilon,x_2))
\end{array}\]
and no other rule contains $q$ in its right-hand side.
The parameter $y_1$ can be eliminated from the $q$-rules.
We replace all occurrences of $y_1$ with the constant $\varepsilon$
in the right-hand side of the $q$ rules,
i.e., the third rule in the example above becomes
$q(\varepsilon,y_2)\to \varepsilon$.

{\bf Stay move removal.}\quad
Removing stay moves also contributes to parameter reduction
because it may remove states with parameters by inlining.
For example,
if we have a rule $q(\%,y_1,y_2) \to q'(x_0) y_1$,
then all occurrences of $q(x_i,e_1,e_2)$ in the right-hand sides of the rules
can be replaced by $q'(x_i)~ e_1$.
Since the state $q$ is discarded,
the number of parameters is consequently reduced.
Note that our translation only introduces stay rule of the form
of the $q(\%,\dots)\to f$ which are particularly easy to inline.
A general procedure for stay-move removal of similar transducers
is given in Theorem~31 of~\cite{DBLP:journals/acta/EngelfrietM03}.

{\bf Unreachable state removal.}\quad
Removing unreachable states can reduce the number of parameters
for the same reason as the stay move removal.
When we construct the state-call dependency graph according to all rules,
it is obvious that unreachable states from the initial state are unused.
In an extreme case, the translation of a given \minXQuery program
introduces only redundant parameters,
which can be reduced by the four procedures above.
It is possible to detect whether the case happens or not
from a \minXQuery program without translation.
Let us classify occurrences of variables in the \minXQuery program into three:
bound variables, path variables, and output variables.
A {\em bound variable\/} occurs at the left-hand side of a \|let| or \|for| clause;
a {\em path variable\/} occurs at the beginning of an XPath expression;
an {\em output variable\/} occurs at the other parts.
Our translation introduces parameters for two purposes:
XPath predicates and output variables.
Many parameters introduced for variable bindings can be removed
because most variables in \minXQuery programs occur as path variables.
These parameters are removed as unused parameters in the aforementioned way.
Additionally, 
if an output variable occurrence in the program is
only where it is introduced by the nearest enclosing \|for| clause,
the corresponding parameter can be removed by stay move removal.
In summary, we easily obtain the following lemma from these observation.

An MFT where each state is of rank $1$, i.e., in which no context
parameters $y_1,\dots$ are used is called \emph{top-down forest transducers},
abbreviated FT.

\begin{theorem}\label{th:noparam}
Let $P$ be a \minXQuery program.
When $P$ satisfies
(1) every XPath expression contains no predicates, and
(2) every output variable occurrence is not
inside of a \|for| clause
except that the corresponding bound variable occurrence is in the \|for| clause,
there effectively exists an FT equivalent to $\sem{P}$.
\end{theorem}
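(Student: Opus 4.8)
The plan is to start from the macro forest transducer $M_P$ produced by the translation of the preceding theorem (so that $\sem{M_P}=\sem{P}$, and $M_P$ is terminating and effectively constructible) and to show that under hypotheses~(1) and~(2) the parameter-reduction procedures of Section~\ref{sect:opt} --- unused-parameter reduction, constant-parameter reduction, stay-move removal, and unreachable-state removal --- when applied repeatedly to a fixpoint turn $M_P$ into an FT. Each of these procedures is semantics-preserving and strictly decreases a finite measure on the transducer (for instance the number of states plus the number of state/parameter pairs), so the iteration terminates with a transducer equivalent to $\sem{P}$; it then remains only to show that at the fixpoint no surviving state has rank greater than $1$.

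To that end I would first recall where $\mathcal{T}$ and $\mathcal{F}$ introduce parameters: exactly in four situations --- the initial rule for the special input variable \texttt{\$input}; every \texttt{for} clause, which adds, via $\mathcal{F}$ on the transition into the body state (see~(\ref{eqn:pathffor})), a parameter holding a copy $a(q_{\m{copy}}(x_1))$ of the current path-match node; every \texttt{let} clause, which adds a parameter holding the call $q_v(x_0,\dots)$ to the defining expression; and every XPath predicate, which contributes the regular-lookahead states $q_{p'}$. Hypothesis~(1) eliminates the last case altogether, so every parameter of every state of $M_P$ corresponds, through the mapping $\rho$ carried by the compilation, to \texttt{\$input} or to a \texttt{for}/\texttt{let}-bound variable in scope, and such a parameter has a (transitive) bare occurrence in the rules of a state precisely when the corresponding variable occurs as an output variable in the subexpression compiled below that state. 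I would also record the two shapes of state in $M_P$: the \emph{stay-rule states} produced directly by $\mathcal{T}$, whose unique symbol-independent rule $q(\%,y_1,\dots,y_m)\to r$ has $r$ built only from output symbols, $\varepsilon$, parameters, and stay-calls $q'(x_0,\dots)$; and the \emph{navigation states} produced by $\mathcal{F}$, which branch on the input symbol, pass all context parameters through unchanged, and create a new parameter only on the edge into their target state.

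The core step is to eliminate all variable-parameters. By hypothesis~(2) every output occurrence of a \texttt{for}-bound variable $v$ lies directly in the body $e'$ of the \texttt{for} clause binding it; hence every state at which the parameter for $v$ is used is reachable from the body state through stay-moves only --- leaving this ``stay-rule region'' would mean entering a nested \texttt{for} (pushing the occurrence deeper, contradicting~(2)) or entering a bare path (below which the compilation produces no further subexpression). Stay-move removal, performed from the leaf occurrences $q''(\%,\dots)\to y_{\rho(v)}$ outward, therefore replaces this parameter by the copy $a(q_{\m{copy}}(x_1))$ supplied on the edge into the body state, discarding these states together with that copy of the parameter. The same reasoning applies to a \texttt{let}-bound variable and to \texttt{\$input}: since these have no \texttt{for}-binder, hypothesis~(2) forces their output occurrences outside every \texttt{for} clause, hence again inside stay-rule states only, and inlining replaces the parameter by $q_v(x_0,\dots)$ respectively by $q_{\m{copy}}(x_0)$. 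Any parameter that, after such inlining, has no bare occurrence in a state's rules is removed on the spot by unused-parameter reduction; this disposes in particular of the pass-through copies carried by navigation states and by the states compiling subexpressions in which the variable does not appear. Once all variable-parameters are gone, the surviving states are navigation states together with $q_{\m{copy}}$, whose only parameters were such pass-through copies, so every surviving state has rank $1$ and the result is an FT, obtained effectively from $P$.

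The \textbf{main obstacle} is the bookkeeping around the interaction of the procedures. Stay-move removal of a body state whose right-hand side still mentions a pass-through parameter re-substitutes that parameter into the navigation state that called it, so one must argue that, applied repeatedly, the optimizations really reach a fixpoint in which \emph{every} parameter is gone --- equivalently, that eliminating the innermost variable-parameters first (working outward along the parse tree of $P$) never strands a parameter on a navigation state. I would make this precise by induction on the structure of $P$, with an invariant describing, for each subexpression $e$ and its compilation $\mathcal{T}(e,\rho,q)$, which states and parameters survive the optimizations and the fact that every call into the state $q$ has had its parameter arguments either dropped or inlined. Hypotheses~(1) and~(2) are exactly what makes this invariant go through, since together they guarantee that no parameter ever has to be threaded through a navigation state whose output genuinely depends on it.
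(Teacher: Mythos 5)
Your proposal is correct and follows essentially the same route as the paper's own proof: identify the four sources of accumulating parameters (initial state, \texttt{for}, \texttt{let}, bare XPath clauses, with condition~(1) ruling out predicate lookahead states), then use condition~(2) to argue that each surviving parameter is either barely used only in states reachable by stay moves from the point where it is introduced (hence inlined away by stay-move removal) or merely passed through navigation states (hence dropped by unused-parameter reduction). The paper's argument is a terser sketch of exactly this case analysis; your distinction between stay-rule states and navigation states and the structural induction on $P$ only fill in bookkeeping the paper leaves implicit.
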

\begin{proof}
Suppose that a \minXQuery program $P$ satisfies the conditions above.
It suffices to show that
all accumulating parameters of the translated MFT can be removed.
From the first condition on XPath expressions,
all parameters of the translated MFT are introduced
in the following four cases of translation:
the initial state, 
\|for| clauses, \|let| clauses, and XPath queries.
For the initial state,
our translation introduces a parameter for the \|$input| variable.
From the second condition,
it does not occur inside any \|for| clause,
hence the parameter can be eliminated
in all states introduced for translating the inside expressions.
For the other states which have the parameter,
we can remove them by stay move removal.
As for the \|for| and \|let| clauses,
the present statement can be shown in a similar way.
The state introduced for an XPath query translation
can be immediately eliminated by stay move removal.
\end{proof}

\subsection{Composition}\label{sect:comp}
Composition of two XML transformations is to remove
intermediate XML tree constructions
like {\em deforestation\/}~\cite{DBLP:journals/tcs/Wadler90},
which has been heavily studied in the context of functional programming.
This can be a powerful optimization for XML processing.
Koch chose for GCX a compositional fragment of XQuery.
This means that two XQuery programs (where the second reads the
output of the first) can be composed into one
program. It can be shown that our fragment of XQuery can be composed as well.
What is known on the forest/tree transducer side with respect
to composition?
It is easy to see that
both MFT and MFT without parameters (FT) are not closed under composition.
But, two FTs can be composed into one MFT.
This can be obtained through known results; we give a direct construction
and determine its worst-case time complexity. 
We consider further composition results, when one of the involved
transducer is a \emph{tree transducer}, and state the complexity in terms
of bigO-notation. Our transducers are slightly different from those in 
the literature, plus, no complexity statements are known; therefore 
dwell on the theory and establish these results here.
We denote $f\comp g$ for a composition of two functions and
$F\comp G$ for a composition of two classes, that is,
$F\comp G=\{f\comp g \mid f\in F, g\in G\}$.


{\bf Expressive Power.}\quad
An XML forest can naturally be seen as a
\emph{binary tree}, using the well-known first-child next-sibling
encoding (see, e.g.,~\cite{DBLP:journals/jcss/Schwentick07}).
In this encoding, the first child of an unranked node becomes the left child in the
binary tree, and the next sibling in the unranked tree becomes the right child
in the binary tree.
If an element node has no first child or no next sibling, then in the binary
tree it has the empty tree $\varepsilon$ as left (resp. right) child.

A \emph{binary XML tree} is a binary tree with internal nodes of rank $2$
labeled by elements in ${\mathbf U}^*$ and leaves labeled $\varepsilon$.
The set of all binary XML trees is denoted by $\mathcal B$.
For an XML forest $f\in{\mathcal F}$ we denote by $\text{fcns}(f)$ its first-child next-sibling
encoded binary XML tree in $\mathcal B$; i.e.,
$\text{fcns}(\varepsilon)=\varepsilon$ and for forests $f_1,f_2$
and $\sigma\in{\mathcal U}^+$,
\[
\text{fcns}(\sigma(f_1),f_2)=\sigma(\text{fcns}(f_1),\text{fcns}(f_2)).
\]
Given an MFT $M$, its \emph{binary tree translation} 
$\sem{M}_{\mathcal B}$ is the function over $\mathcal B$ defined as
\[
\sem{M}_{\mathcal B}=\{(\text{fcns}(f),\text{fcns}(g))\mid(f,g)\in\sem{M}\}.
\]
We denote by $\mft$ the class of all binary tree translations realized
by MFTs.

We can now compare the expressive power of MFTs to other well-known
classes of tree translations.
A \emph{macro tree transducer} (\emph{top-down tree transducer}),
for short MTT (TT),
is an MFT (FT) $M$ such that the right-hand side
of each rule is a {\bf tree} in which $(\Sigma\cup\{\%t\})$-labeled nodes
are binary.
In this case the output is always a binary tree, and therefore we define the
tree translation of $M$ as 
$\sem{M}_{\mathcal B}=\{(\text{fcns}(f),g)\mid(f,g)\in\sem{M}\}$.
The classes of translations are denoted $\mtt$ and $\tt$.
Macro and top-down tree transducers are conventionally defined for
ranked input and output alphabets (not necessarily binary), 
and do not have stay moves or default rules.
These inclusions hold:
$\tt\subsetneq\ft\subsetneq\mtt\subsetneq\mft$.

It was shown in~\cite{DBLP:journals/ipl/PerstS04}, for transducers
without stay and default rules, that every macro
forest transducer can be decomposed into a macro tree
transducer, followed by an ``evaluation mapping'' ${\sf eval}$.
The macro tree transducer is obtained from the macro forest transducer
by replacing each occurrence of concatenation
in the right-hand sides of the rules by a special binary symbol $@$.
For instance, the MFT right-hand side 
$q(x_1)y_1b(\varepsilon,\varepsilon)$ is replaced by the tree
$@(q(x_1),@(y_1,b(\varepsilon,\varepsilon)))$.
The evaluation mapping interprets $@$-symbols by concatenation, i.e.,
$\eval(@(t_1,t_2))=\eval(t_1)\eval(t_2)$, and for all other 
labels realizes the identity.
It should be clear that this result also holds in the presence of
stay and default rules. Thus, we have
$\mft\subseteq\mtt\comp\EVAL$.
It is not difficult to show that also the converse inclusion holds:
given an MTT $M$ and a evaluation mapping $\eval_\Sigma$, we can
construct an MFT $N$ such that $\sem{N}=\sem{M}\comp\eval_\Sigma$
(we simply remove all $@$-symbols from the right-hand sides of
$M$'s rules by interpreting them according to $\eval_\Sigma$).
The constructions do not affect the presence of parameters, and
thus the inclusions also hold for forest transducers 
(without context-parameters).
It is shown in~\cite{DBLP:journals/ipl/PerstS04} that 
$\eval_\Sigma$ can be realized by a macro tree transducer.
\begin{lemma}\label{lm:expr}
The following relations hold (and one
representation can be obtained from the other in linear time):
\begin{tabular}[t]{llll}
(1)&$\mft$&=&$\mtt\comp\EVAL$\\
(2)&$\ft$&=&$\tt\comp\EVAL$\\
(3)&$\EVAL$&$\subsetneq$&$\mtt$.
\end{tabular}
\end{lemma}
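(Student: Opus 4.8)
The plan is to make precise the decomposition and composition already sketched in the paragraph above, and to supply the one genuinely new ingredient, namely the separating example witnessing strictness in~(3). All the transducer surgery involved is local on right-hand sides, which is exactly what yields the linear-time bounds.

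For item~(1), I would prove both inclusions by syntactic rewriting. Given an MFT $M$ over $\Sigma$, build an MTT $M'$ over $\Sigma\cup\{@\}$, with $@$ a fresh binary symbol, by replacing in every right-hand side each maximal concatenation $u_1\cdots u_k$ of tree-shaped terms by the right-associated term $@(u_1,@(\dots,@(u_{k-1},u_k)\dots))$, an empty forest becoming $\varepsilon$; stay-move calls $q'(x_0)$, default rules and $\varepsilon$-rules are copied verbatim. Since the state calls occurring inside each right-hand side are literally unchanged, $M'$ is constructed in time $O(|M|)$ and terminates exactly when $M$ does. One then proves $\sem{M}_{\mathcal B}=\sem{M'}_{\mathcal B}\comp\eval_\Sigma$ by induction along the terminating evaluation of $M$; the crux is a commutation lemma stating that $\eval_\Sigma$ applied to $M'$'s output coincides, state by state, with $M$'s output, which holds because $\eval_\Sigma$ is a homomorphism away from $@$ and interprets $@$ as forest concatenation, so it can be pushed through the subtree-call and parameter substitutions that define the semantics. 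For the converse inclusion $\mtt\comp\EVAL\subseteq\mft$, apply the inverse rewriting: from an MTT $N$ whose output is over $\Sigma\cup\{@\}$ obtain an MFT $N'$ by replacing each $@(s_1,s_2)$ in a right-hand side by the concatenation of the MFT-translations of $s_1$ and $s_2$; this is again linear and correct by the same induction. This is precisely the argument sketched before the lemma; I am only pinning down the induction.

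Item~(2) then requires no separate work: neither rewriting introduces or removes context parameters, so a parameter-free MFT (an FT) maps to a parameter-free MTT (a TT) and conversely, whence $\ft=\tt\comp\EVAL$ with the same linear bounds; note the $\EVAL$ on the right is still the full class of evaluation maps, which need not be TT-realizable. For item~(3), the inclusion $\EVAL\subseteq\mtt$ is the construction of~\cite{DBLP:journals/ipl/PerstS04}, which I would recall concretely: an MTT with one state $q_e$ carrying a single accumulating parameter that holds ``the forest still to be appended'', with the key rules $q_e(@(x_1)x_2,y_1)\to q_e(x_1,q_e(x_2,y_1))$ and $q_e(\varepsilon,y_1)\to y_1$, copying element labels onto binary output nodes while recursing into the children with empty context, together with an initial state that issues $q_e(x_0,\varepsilon)$; this has $O(|\Sigma|)$ rules. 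Strictness follows from exhibiting an $\mtt$-translation that is not any evaluation map: the constant map sending every input to a fixed tree $c(\varepsilon,\varepsilon)$ is realized by a one-state top-down tree transducer, yet cannot equal any $\eval_\Sigma$ because the latter is the identity on $@$-free trees whereas the constant map is not. (Equivalently, MTTs can be of exponential size increase while every $\eval_\Sigma$ is size non-increasing.) Hence $\EVAL\subsetneq\mtt$.

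The only step with genuine content is the correctness induction for item~(1) in the presence of stay and default rules. It must be phrased along the terminating call structure of the transducer rather than by induction on the size of the input, and the commutation lemma $\eval_\Sigma(\sem{r'})=\sem{r}$ --- with $r'$ the rewritten right-hand side --- must be stated with an induction hypothesis strong enough to be threaded through nested state calls together with their parameter substitutions. Once that lemma is in place, the linear-time bounds, item~(2) as a corollary, the eval-MTT construction and the separating example are all bookkeeping.
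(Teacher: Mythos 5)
Your proposal follows the same route as the paper: the paper's proof of this lemma is exactly the discussion preceding its statement (replace each concatenation in right-hand sides by a binary $@$ to get $\mft\subseteq\mtt\comp\EVAL$, remove the $@$'s for the converse, note that both rewritings preserve parameter-freeness so that (2) follows, and cite Perst and Seidl for $\EVAL\subseteq\mtt$). You merely make the correctness induction explicit and supply the easy strictness witness for (3), which the paper leaves implicit.
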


We want to derive new composition results for MFTs,
using existing results about tree transducers.
We are interested in complexity, and therefore must look
carefully how stay rules and default rules behave 
under composition.

As it turns out, stay rules are quite useful for
transducer composition: they allow to ``compress'' new
right-hand sides (using the compression power of 
transducer rules).
Without them, composing two top-down tree transducers takes exponential time,
with them: quadratic time!
This is easy to see:
consider a transducer $M_1$ that translates every $a$-node into
$4$ $b$-nodes:
\[
\begin{array}{lcl}
q_0(a(x_1))&\to&b(b(b(b(q_0(x_1)))))\\
q_0(\varepsilon)&\to&\varepsilon
\end{array}
\]
The next transducer $M_2$ spawns two new copies
for each $b$ node, via a rule of the form
\[
\begin{array}{lcl}
p_0(b(x_1))&\to&c(p_0(x_1),p_0(x_1))\\
p_0(\varepsilon)&\to&\varepsilon
\end{array}
\]
If we follow the natural product construction of translating via $M_2$ 
the right-hand sides of $M_1$'s rules,
then we obtain this deterministic top-down tree transducer (DT for short) rules
\[
\begin{array}{lcl}
\angl{q_0,p_0}(a(x_1))&\to&c(c(c(c(\angl{q_0,p_0}(x_1),\dots))))\\
\angl{q_0,p_0}(\varepsilon)&\to&\varepsilon
\end{array}
\]
which contains a complete binary tree of height $5$ in its right-hand side
and thus is of exponential size. 
It is not difficult to see
that there is no smaller equivalent DT.
In the presence of \emph{stay rules} we can avoid such
blow-ups. 
A stay transducer for the example does not have a right-hand side of exponential size
for the $(\angl{q_0,p_0},a)$-rule, but instead breaks up that tree into many
separate rules of the node-by-node $M_2$-translation of $M_1$'s $(q_0,a)$-rule:
\[
\begin{array}{lcl}
\angl{q_0,p_0}(a(x_1))&\to& c(\angl{q_0,p_0,1}(x_1),\angl{q_0,p_0,1}(x_1))\\
\angl{q_0,p_0,1}(a(x_1))&\to& c(\angl{q_0,p_0,2}(x_0),\angl{q_0,p_0,2}(x_0))\\
\angl{q_0,p_0,2}(a(x_1))&\to& c(\angl{q_0,p_0,3}(x_0),\angl{q_0,p_0,3}(x_0))\\
\angl{q_0,p_0,3}(a(x_1))&\to& c(\angl{q_0,p_0}(x_0),\angl{q_0,p_0}(x_0))\\
\end{array}
\]
Using stay rules we can construct in quadratic time
a DT realizing the composition of two given DTs.
In fact, this also works for two TTs, i.e.,
if the given transducers have stay rules and default rules.
Recall that the size $|M|$ of transducer $M$ is defined as
$|\Sigma|$ plus the sum of sizes of $M$'s rules.

\begin{lemma}\label{lm:compTT}
Let $M_1,M_2$ be TTs over $\Sigma$.
A TT $M$ can be constructed in
time $O(|\Sigma||M_1||M_2|)$ such that $\sem{M} = \sem{M_1} \comp \sem{M_2}$.
\end{lemma}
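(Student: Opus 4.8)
The plan is to adapt the classical product construction for composition of top-down tree transducers (due to Rounds and Baker) to forest transducers with stay moves and default rules, and — this is the whole point — to use stay moves to keep the constructed right-hand sides small, turning the classical exponential bound into $O(|\Sigma||M_1||M_2|)$. Write $M_i=(Q_i,\Sigma,q_{0,i},R_i)$. First I would fix the state set of $M$: the ``product'' states $\angl{q,p}$ with $q\in Q_1$ and $p\in Q_2$, intended to satisfy $\sem{\angl{q,p}}=\sem{q}\comp\sem{p}$, together with, for every rule $\ell\to\zeta$ of $M_1$, every node $v$ of the tree $\zeta$, and every $p\in Q_2$, an auxiliary state $\angl{\ell,v,p}$ whose task is to emit the part of $M_2$'s computation that arises when $M_2$, in state $p$, enters the position $v$ of the $M_1$-output tree obtained from $\zeta$. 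All indices range over finite sets, so this is a finite state set, and $q_0:=\angl{q_{0,1},q_{0,2}}$.

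Next I would generate the rules by simulating $M_2$ over the right-hand sides of $M_1$, one node at a time. Fix an $M_1$-rule $q(\sigma(x_1)x_2)\to\zeta$ (its $\varepsilon$-rule and default $\%t$-rule are handled identically, with the evident changes) and a state $p\in Q_2$. The right-hand side of the $(\angl{q,p},\sigma)$-rule is obtained by performing a single $M_2$-step at the root of $\zeta$: if that root is a call $q'(x_i)$ then, since $\sem{p}(\sem{q'}(g))=\sem{\angl{q',p}}(g)$, we emit $\angl{q',p}(x_i)$; if it is $\varepsilon$ we take the right-hand side of $M_2$'s $\varepsilon$-rule for $p$ and redirect each of its stay moves $p''(x_0)$ to $\angl{q,p''}(x_0)$; if it is a symbol $\delta\in\Sigma\cup\{\%t\}$ with children $v_1,v_2$ in $\zeta$, we take the right-hand side of $M_2$'s $(p,\delta)$-rule, keep its output context verbatim, turn each $x_0$ into a stay move of $M$, and — crucially — replace each call $p''(x_1)$ by the \emph{stay move} $\angl{q\sigma,v_1,p''}(x_0)$ and each call $p''(x_2)$ by $\angl{q\sigma,v_2,p''}(x_0)$. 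The rules of each auxiliary state $\angl{q\sigma,v,p}$ are produced by the same recipe applied to the subtree of $\zeta$ rooted at $v$; since such a state is reached only while $M$ sits on a $\sigma$-labelled input node, it needs only a rule for $\sigma$ together with trivial $\varepsilon$- and default rules. This is exactly how stay moves buy the quadratic bound: rather than substituting the (recursively expanded) $M_2$-images of the subtrees of $\zeta$ into the body of $M_2$'s rule — which is what inflates the $(\angl{q_0,p_0},a)$-rule of the example just before the lemma to a complete binary tree — we leave that body as is and delegate the children to fresh states via $x_0$.

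For correctness I would show, by induction on the input forest, that $\sem{\angl{q,p}}(f)=\sem{p}(\sem{q}(f))$, together with the analogous invariant for the auxiliary states; this is routine but must be stated carefully because of the following-sibling structure of forests and because $M_1$'s default rule routes $\%t$-labelled output nodes into $M_2$'s default rule, so $\%t$ has to be threaded consistently. For the size and time bound: there are at most $|Q_1||Q_2|$ product states, each contributing one rule per symbol of $\Sigma$, and at most $\bigl(\sum_{\ell}|\zeta_\ell|\bigr)\cdot|Q_2|\le|M_1||Q_2|$ auxiliary states, each contributing $O(1)$ nontrivial rules. The body attached to any state that processes a $\delta$-node has size bounded by that of the corresponding $(p,\delta)$-rule of $M_2$, so for a fixed $M_1$-rule node these bodies, summed over all $p\in Q_2$, total at most $|M_2|$; summing over all nodes of all $M_1$-rules gives $O(|M_1||M_2|)$ for the auxiliary part, and charging each product rule its body size gives $O(|\Sigma||M_1||M_2|)$ overall. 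Everything is computed by one pass over $R_1$ and $R_2$, so the running time is $O(|\Sigma||M_1||M_2|)$ as claimed.

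The step I expect to be the main obstacle is precisely this size control coupled with staying inside the class of FTs: one must verify that the node-by-node stay-move delegation never forces an accumulating parameter — equivalently, that the following-sibling ``continuation'' of a $Q_1$-call occurring inside a right-hand side of $M_1$ can always be recorded by a finite auxiliary-state index rather than carried along as a parameter, where the hypothesis that $\Sigma\cup\{\%t\}$-labelled nodes of a TT are binary is what keeps such continuations finite — and then to push the counting through so that each $M_2$-rule body is charged only a constant number of times per node of $M_1$. Once both points are secured, $M$ is a TT and the exponential-to-quadratic improvement over the classical construction follows, since at no point are rule bodies of $M_1$ and $M_2$ nested inside one another.
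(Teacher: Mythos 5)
Your overall architecture --- product states $\langle q,p\rangle$, one auxiliary state per (rule of $M_1$, node of its right-hand side, state of $M_2$), and stay moves that delegate the children of each right-hand-side node to fresh states instead of substituting $M_2$-translations into one another --- is exactly the paper's construction, and your correctness invariant and counting are in the right shape. There is, however, one genuine gap, and it sits precisely under the sentence ``$M_1$'s default rule routes $\%t$-labelled output nodes into $M_2$'s default rule.'' That claim is false. A $\%t$-node in the right-hand side of $M_1$'s default rule is emitted carrying the label of the \emph{current input symbol} $a$; if $M_2$ has a specific $(p,a)$-rule, the composition must apply that rule, not $M_2$'s default rule. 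A single default rule for $\langle q,p\rangle$ cannot do this, because different input symbols falling under $M_1$'s default rule require different $M_2$-rules to be inlined at the $\%t$-positions. As written, your transducer computes the wrong output whenever $M_1$ falls back to its default rule on a symbol for which $M_2$ has a dedicated rule.

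The paper repairs this with a preprocessing step you are missing: before forming the product, for every $a\in\Sigma$ that has a $(p,a)$-rule in $R_2$ but no $(q,a)$-rule in $R_1$, add to $R_1$ the specialization of $q$'s default rule obtained by replacing every occurrence of $\%t$ by $a$. After this, whenever a default rule of the product fires, $M_2$ genuinely has no symbol-specific rule for the input at hand, so routing $\%t$-nodes to $M_2$'s default rule becomes correct. This adds at most $|\Sigma|$ specialized copies of each default rule, inflating $M_1$ to size $O(|\Sigma||M_1|)$ --- which is in fact where the $|\Sigma|$ factor in the stated bound comes from. Your own accounting attributes that factor to ``one rule per symbol of $\Sigma$'' for the product states, which only holds once the preprocessing has been performed; without it your construction would be $O(|M_1||M_2|)$ but incorrect. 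With the preprocessing inserted, the rest of your argument (stay-move delegation, the invariant $\sem{\langle q,p\rangle}=\sem{q}\comp\sem{p}$ proved by induction on the input, and the size count) goes through essentially as in the paper.
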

\begin{proof}
Let $M_i=(Q_i,\Sigma,q_i,R_i)$.
We first add some rules to $M_1$:
For every $a\in\Sigma$ for which there is
a $(p,a)$-rule in $R_2$ but no $(q,a)$-rule in $R_1$
we add the rule $r_a$ to $R_1$; the rule $r_a$ is 
obtained from $M_1$'s binary default rule for state $q$
by replacing every occurrence of $\%t$ (in left and right-hand side)
by $a$.
We define $M=(Q,\Sigma,\angl{q_1,q_2},R)$.
For all states $q\in Q_1$ and $p\in Q_2$ let
$\angl{q,p}$ be a state in $Q$.
For every rule $r\in R_1$, 
node $u$ of the right-hand side of $r$, and 
state $p\in Q_2$, let $\angl{r,u,p}$ be 
a state in $Q$.
Let $r$ be the rule 
$q(b(x_1,\dots,x_k))\to t$ with $k\in\{0,1,2\}$ and
$b\in\Sigma\cup\{\%t\}\cup\{\varepsilon\}$
and let $p$ be a state in $Q_2$.
We let the rule
$\angl{q,p}(b(x_1,\dots,x_k))\to\angl{r,\lambda,p}(x_0)$
be in $R$. 
Recall that $\lambda$ denotes the root node of a tree.
For every node $u$ of $t$ we let the rule 
$\angl{r,u,p}(b(x_1,\dots,x_k))\to t'$
be in $R$.
If $u$ is labeled by $q'(x_i)$ for $q'\in Q$ and $0\leq i\leq k$,
then define $t'=\angl{q,p}(x_i)$.
Otherwise,
$t'$ is obtained from the right-hand side of the unique
$p$-rule that is applicable to node $u$ of $t$.
We finally replace every $p'(x_i)$ by $\angl{r,u.i,p'}(x_0)$,
where $u.0$ denotes $u$.
The correctness of the construction follows from the fact that
for every input forest $t$:
$\sem{\angl{r,u,p}}(t) = \sem{p}(s)$ where $s$ is the subtree at
$u$ of $\sem{q}(t)$ and $r$ is the unique $q$-rule that is applicable
to the root of $t$. The statement can be proved by induction on 
the structure of $t$. 
Let $\text{maxrhs}(M_1)$ be the size of a largest right-hand side of $M_1$'s rules.
In the first step we add at most $|\Sigma|$-many rules to $R_1$.
We thus obtain a transducer of size $O(|\Sigma||M_1|)$.
For each rule $r$ of $M_2$ we construct in $M$ at most $O(|\Sigma||M_1|)$-many versions
of that rule (of same size as $r$). 
Thus $M$ is constructed in time $O(|\Sigma||M_1|||M_2|)$.
\end{proof}

Note that the effective composition closure of total deterministic
top-down tree transducers (i.e., TT's without stay moves and default rules)
was proved in Theorem~2 of~\cite{DBLP:journals/mst/Rounds70};
it is also shown there that non-total such transducers are \emph{not} closed
under composition. Baker shows how to restrict nondeterministic
top-down tree transducers so that they can be composed into one
transducer~\cite{DBLP:journals/iandc/Baker79b}.
We are not aware of statements in the literature about 
the time complexity of tree transducer composition.
Before we give results about composition of forest transducers,
we lift two existing results about macro tree transducers 
to the presence of stay moves and default rules.

\begin{lemma}\label{lm:mttcomp}
Let $M_1$ be an MTT and $M_2$ a TT.
Then MTTs $M,M'$ can be constructed in
time $O(|\Sigma||M_1||M_2|)$ such that 
$\sem{M} = \sem{M_1} \comp \sem{M_2}$ and
$\sem{M'} = \sem{M_2} \comp \sem{M_1}$.
\end{lemma}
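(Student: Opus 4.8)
The plan is to handle the two composition directions separately, since they are not symmetric: $\sem{M_1}\comp\sem{M_2}$ (MTT followed by TT) behaves quite differently from $\sem{M_2}\comp\sem{M_1}$ (TT followed by MTT). For the second one, $\sem{M} = \sem{M_2}\comp\sem{M_1}$, I would adapt the standard fact that the composition of a top-down tree transducer with a macro tree transducer is again a macro tree transducer (this is the classical result from Engelfriet--Vogler, essentially that $\mathrm{TT}\comp\mathrm{MTT}\subseteq\mathrm{MTT}$). The construction is the natural product construction: a state of $M$ is a pair $\angl{q,p}$ with $q$ a state of $M_2$ and $p$ a state of $M_1$, of rank equal to $\mathrm{rank}(p)$, and the right-hand side for $\angl{q,p}$ on input symbol $b$ is obtained by running $M_1$'s state $p$ symbolically over the right-hand side $t$ of $M_2$'s $(q,b)$-rule. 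Because $M_1$ is a macro (not just top-down) transducer, when $p$ descends into a subtree of $t$ it may pass along parameters, but since $M_2$'s output is a tree, the symbolic evaluation of $p$ on the \emph{ranked tree} $t$ is well-defined; the key observation (as in Lemma~\ref{lm:compTT}) is that we never need the full nested tree in a single right-hand side — we introduce intermediate states $\angl{r,u,p}$ indexed by a node $u$ of $t$ and use \emph{stay moves} to break the computation node by node through $t$, carrying $p$'s parameters along. This keeps each new rule of size $O(|r|)$ where $r$ is the corresponding $M_2$-rule, and gives the $O(|\Sigma||M_1||M_2|)$ bound exactly as before.

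For the first direction, $\sem{M} = \sem{M_1}\comp\sem{M_2}$ with $M_1$ an MTT and $M_2$ a TT, I would again use a product construction, but now $M_1$'s state $q$ (of some rank $m+1$) is composed with $M_2$'s state $p$ to form $\angl{q,p}$; the subtlety is what rank $\angl{q,p}$ should have and how $M_2$ should act on $M_1$'s \emph{parameters}. The standard trick (Engelfriet--Vogler, and later expositions) is that since $M_2$ is a top-down (parameterless) transducer, when $M_2$ in some state $p$ reads the output of $M_1$ it will, upon encountering a parameter position $y_j$ of $M_1$, need to continue reading whatever tree was substituted for $y_j$ — and it may reach $y_j$ in several different $M_2$-states. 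So one sets $\mathrm{rank}(\angl{q,p}) = \mathrm{rank}(q)\cdot|Q_2| + 1$, reserving for each original parameter $y_j$ one new parameter per state of $M_2$; when translating the right-hand side of a $(q,b)$-rule of $M_1$ by $M_2$'s state $p$, an occurrence of $y_j$ is replaced by the new parameter indexed by $(j,p)$. Again the node-by-node stay-move technique from Lemma~\ref{lm:compTT} is what lets each produced rule stay proportional in size to the $M_1$-rule that spawned it, so the time bound $O(|\Sigma||M_1||M_2|)$ is maintained; the number of states is $O(|Q_1||Q_2|)$ but each new rule is constructed from one $M_1$-rule together with one $M_2$-state, and the relevant $M_2$-right-hand side plugged in at each node has size bounded by $\mathrm{maxrhs}(M_2)$.

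In both cases the correctness argument is a structural induction on the input forest, exactly parallel to the one in the proof of Lemma~\ref{lm:compTT}: one shows that $\sem{\angl{r,u,p}}$ applied to a subtree computes $\sem{p}$ (possibly with the extra parameter accounting in the first direction) on the subtree rooted at $u$ of the output of the relevant $M_1$- or $M_2$-rule. The handling of $\varepsilon$-rules and default rules carries over verbatim — as in Lemma~\ref{lm:compTT}, before composing we first enrich the rule set of the outer transducer so that for every symbol $a\in\Sigma$ that the inner transducer has a rule for, the outer one has a (possibly default-derived) $a$-rule, which costs at most $|\Sigma|$ extra rules and keeps the size within $O(|\Sigma||M_1|)$ (resp.\ $O(|\Sigma||M_2|)$).

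The main obstacle I expect is the parameter bookkeeping in the first direction ($M_1$ an MTT): getting the rank of $\angl{q,p}$ and the substitution of parameters right so that (a) the output semantics is preserved when $M_2$ re-reads a tree that $M_1$ placed into a parameter from several of $M_2$'s states, and (b) the stay-move decomposition doesn't accidentally blow up the parameter lists (each intermediate state $\angl{r,u,p}$ must carry exactly the parameters that are "live" at node $u$, so the rule sizes stay linear). Once that accounting is pinned down, the rest is a routine adaptation of Lemma~\ref{lm:compTT}.
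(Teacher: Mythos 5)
Your proposal is correct and takes essentially the same route as the paper: the $M'$ direction is the same product-plus-stay-move construction as Lemma~\ref{lm:compTT}, and for $M$ you use exactly the paper's device of reserving $|Q_2|$ copies of each parameter $y_j$ of $M_1$ (one per state $p_i$ of $M_2$) and replacing an occurrence of $y_j$ translated in state $p_i$ by the copy indexed by $(j,p_i)$. The only nit is the rank arithmetic: a state $\angl{q,p}$ needs $m\cdot|Q_2|$ parameters where $m$ is the number of parameters of $q$, i.e.\ rank $1+m|Q_2|$ rather than $\mathrm{rank}(q)\cdot|Q_2|+1$, which is immaterial to the argument.
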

\begin{proof}
The construction of $M'$ is similar as in the proof of 
Lemma~\ref{lm:compTT}, so we omit the details.
The construction of $M$ is more complicated.
Let $p_1,\dots,p_n$ be an ordering of the states of $M_2$.
Let $q$ be a state of $M_1$ of rank $m+1$, $m\geq 0$,
and let state $p_i$ be a state of $M_2$.
Then define $\angl{q,p_i}$ to be a state of $M$ of rank $m+1$. 
For every $q$-rule $r$ of $M_1$ and node $u$ in the right-hand side of $r$ 
define
$\angl{r,u,p_i}$ to be a state of $M$ of rank $1+mn$ with $n=|Q_2|$.
The idea is as before, state $\angl{r,u,p_i}$ is obtained by translating
the node $u$ of the right-hand side $t$ of $r$ in state $p_i$ of $M_2$.
The difference now is how to translate parameters $y_j$:
we must output the $p_i$-translation of the \emph{current} parameter tree 
in $y_j$. For this, we provide state $\angl{q,p}$ with $n$-many copies
of each parameter $y_j$ (one for each state $p_i$).
Details are omitted due to lack of space; they can be found
in the full version of the present paper~\cite{full}.
\end{proof}

Analogous results (without complexity statements)
about transducers without stay moves and 
default rules are stated in
Corollary~4.10 and Theorem~4.12 of~\cite{DBLP:journals/jcss/EngelfrietV85},
respectively.


{\bf Composition of Forest Transducers.}\quad
We now consider the composition of two forest transducers (FTs), i.e.,
MFTs without accumulating parameters.
It is easy to see that FTs are \emph{not} closed under
composition:
(1) the output forests of any FT (seen as binary trees via the
first-child/next-sibling encoding) has height 
at most exponential in the height of the input tree.
(2) the composition of the following FT with itself has
double exponential height increase. It 
translates a forest of $n$ many $a$-nodes into a forest of 
$2^n$ many $a$-nodes:
\[
\begin{array}{lcl}
q(a(x_1,x_2))&\to&q(x_2)q(x_2)\\
q(\varepsilon)&\to&a.
\end{array}
\]

We now show that two FTs can be composed into one MFT. 
In fact, we show a stronger result: the composition 
of an MTT and an FT can be realized by one MFT.
Any FT can be turned in linear time 
into an equivalent MTT by turning each right-hand side into 
it binary tree encoding.
\begin{theorem}
Let $M_1$ be an MTT and $M_2$ an FT.
An MFT $M$ can be constructed in
time $O(|\Sigma||M_1||M_2|)$ such that 
$\sem{M} = \sem{M_1} \comp \sem{M_2}$.
\end{theorem}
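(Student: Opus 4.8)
The plan is to reduce the statement to a result about macro tree transducers and then invoke the lemmas already established. Since an FT can be turned in linear time into an equivalent MTT by binary-encoding its right-hand sides (as noted in the excerpt, following Lemma~\ref{lm:expr}(2)), it suffices to compose $M_1$ (an MTT) with this MTT version of $M_2$. However, Lemma~\ref{lm:mttcomp} only handles $\sem{M_1}\comp\sem{M_2}$ when the \emph{second} transducer $M_2$ is a TT, not a general MTT — and the binary encoding of an FT is a TT, so this is exactly the case we need. Wait: the binary encoding of an FT (whose right-hand sides are \emph{forests}) yields an MTT whose right-hand sides are \emph{binary trees built from} $@$; but an FT has no parameters, so its encoding is in fact a TT. So the core composition $\sem{M_1}\comp\sem{M_2}$ reduces to Lemma~\ref{lm:mttcomp} with the roles ``MTT $\comp$ TT'', giving an MTT $N$ in time $O(|\Sigma||M_1||M_2|)$.

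\textbf{First} I would make the reduction precise. By Lemma~\ref{lm:expr}(2), $\sem{M_2}\in\ft$, and we can write $\sem{M_2}_{\mathcal B}$ as the binary-tree translation of a TT $M_2'$ obtained by replacing each forest concatenation in $M_2$'s right-hand sides by the binary symbol $@$; this is linear in $|M_2|$ and adds $@$ to the output alphabet. Then, using $\text{fcns}$ as the interface, we have on binary trees $\sem{M_1 \comp M_2}_{\mathcal B} = \sem{M_1}_{\mathcal B} \comp \sem{M_2'}_{\mathcal B} \comp \sem{\eval}_{?}$, where the trailing $\eval$ collapses the $@$-nodes introduced by $M_2'$. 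By Lemma~\ref{lm:expr}(3), $\eval_\Sigma$ is realized by an MTT, call it $E$, of size $O(|\Sigma|)$. So the target is $\sem{M_1} \comp \sem{M_2'} \comp \sem{E}$ with $M_1$ an MTT, $M_2'$ a TT, and $E$ an MTT.

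\textbf{Next}, apply Lemma~\ref{lm:mttcomp} to the first two factors: $M_1$ (MTT) $\comp$ $M_2'$ (TT) yields an MTT $N_{12}$ in time $O(|\Sigma||M_1||M_2'|) = O(|\Sigma||M_1||M_2|)$ and of that size. Then apply Lemma~\ref{lm:mttcomp} again to compose $N_{12}$ (MTT) with $E$ — but again $E$ is an MTT, not a TT, so this direct appeal fails. The clean fix is to observe that composing with the evaluation MTT on the \emph{right} is exactly the $\mtt \comp \EVAL = \mft$ direction of Lemma~\ref{lm:expr}(1): an MTT $N_{12}$ whose outputs are then evaluated by $\eval_\Sigma$ is precisely an MFT, obtained in linear time by interpreting each $@$ in $N_{12}$'s right-hand sides as forest concatenation (the converse direction of Lemma~\ref{lm:expr}(1), which the excerpt notes is ``not difficult to show''). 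This produces the desired MFT $M$ with $\sem{M} = \sem{M_1}\comp\sem{M_2}$, and the total construction time is dominated by the single application of Lemma~\ref{lm:mttcomp}, hence $O(|\Sigma||M_1||M_2|)$.

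\textbf{The hard part} will be bookkeeping the alphabets and the $\text{fcns}$ interface so that the three-way composition actually type-checks as stated — in particular, making sure the $@$-symbols produced by $M_2'$ are the ones consumed by $E$, that $E$ acts as identity on all genuine $\Sigma$-symbols so it does not interfere with $M_1$'s output structure, and that the final ``strip the $@$'' step yields an MFT over the original $\Sigma$ (no leftover $@$ in the output, since all output $@$-nodes of $N_{12}$ originate from $M_2'$ and are removed). I would also need to double-check the rank bookkeeping in the second invocation: when we turn $N_{12}\comp E$ into an MFT by reinterpreting $@$, the parameters of $N_{12}$ carry over unchanged, so no rank blow-up occurs and the size stays linear in $|N_{12}|+|\Sigma|$. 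Everything else is routine composition of the cited lemmas.
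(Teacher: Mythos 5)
Your proposal is correct and follows essentially the same route as the paper's proof: decompose the FT $M_2$ into a TT followed by $\eval_\Sigma$ via Lemma~\ref{lm:expr}(2), compose $M_1$ with the resulting TT using Lemma~\ref{lm:mttcomp} in time $O(|\Sigma||M_1||M_2|)$, and then absorb the trailing $\eval_\Sigma$ by reinterpreting the $@$-symbols as forest concatenation (the $\mtt\comp\EVAL=\mft$ direction of Lemma~\ref{lm:expr}(1)) in linear time. Your observation that the final step must go through Lemma~\ref{lm:expr}(1) rather than a second application of Lemma~\ref{lm:mttcomp} is exactly the point, and matches the paper's argument.
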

\begin{proof}
By Lemma~\ref{lm:expr}(2), $M_2$ can be decomposed into
a TT $M_2'$ and an eval mapping $\eval_\Sigma$.
This takes time $O(|M_2|)$.
According to Lemma~\ref{lm:mttcomp} we construct in
time $O(|\Sigma||M_1||M_2'|)$ an mtt $M'$ with 
$\sem{M'}=\sem{M_1}\comp\sem{M_2'}$.
Finally, we compose $M'$ and the $\eval_\Sigma$ in time $O(|M'|)$ 
into the MFT $M$.
\end{proof}

\begin{theorem}
Let $\Sigma$ be an alphabet,
$M_1$ a TT over $\Sigma$, 
and $M_2$ an FT over $\Sigma$.
An FT $M$ can be constructed in
time $O(|\Sigma||M_1||M_2|)$ such that 
$\sem{M} = \sem{M_1} \comp \sem{M_2}$.
\end{theorem}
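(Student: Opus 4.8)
The plan is to rerun the proof of the preceding theorem (composition of an MTT with an FT), replacing its appeal to Lemma~\ref{lm:mttcomp} by Lemma~\ref{lm:compTT}. The effect of this change is that both transducers handed to the composition lemma are now parameter-free, so the intermediate transducer stays in the TT class, and hence the transducer obtained after evaluating away the concatenation markers is parameter-free as well, i.e.\ an FT.

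Concretely I would proceed in three steps. \emph{Step 1.} By Lemma~\ref{lm:expr}(2), decompose the FT $M_2$ into a TT $M_2'$ and an eval mapping $\eval_\Sigma$, where $M_2'$ arises from $M_2$ by replacing every concatenation in a right-hand side by a fresh binary symbol $@\notin\Sigma$; this takes time $O(|M_2|)$ and yields $|M_2'|=O(|M_2|)$. \emph{Step 2.} Since $M_1$ and $M_2'$ are both TTs over the alphabet $\Sigma\cup\{@\}$ (whose size is $O(|\Sigma|)$), apply Lemma~\ref{lm:compTT} to construct, in time $O(|\Sigma|\,|M_1|\,|M_2'|)=O(|\Sigma|\,|M_1|\,|M_2|)$, a TT $M'$ with $\sem{M'}=\sem{M_1}\comp\sem{M_2'}$ and $|M'|=O(|\Sigma|\,|M_1|\,|M_2|)$. \emph{Step 3.} Because $M'$ is a TT, each of its right-hand sides is a tree whose binary nodes carry labels in $\Sigma\cup\{@\}\cup\{\%t\}$; composing $M'$ with $\eval_\Sigma$ is then exactly the $@$-removal construction described just before Lemma~\ref{lm:expr}, namely: in every right-hand side, interpret each $@$-node as the concatenation of the forests produced by its two subtrees. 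This turns each tree right-hand side into a forest right-hand side, leaves every state call $q'(x_i)$ untouched, and introduces no context parameter, so the resulting transducer $M$ is an FT, constructed in time $O(|M'|)$. Assembling the steps, $\sem{M}=\sem{M'}\comp\eval_\Sigma=(\sem{M_1}\comp\sem{M_2'})\comp\eval_\Sigma=\sem{M_1}\comp(\sem{M_2'}\comp\eval_\Sigma)=\sem{M_1}\comp\sem{M_2}$, and the total time is $O(|M_2|)+O(|\Sigma|\,|M_1|\,|M_2|)+O(|M'|)=O(|\Sigma|\,|M_1|\,|M_2|)$.

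Since every ingredient is already in place, the theorem carries no deep obstacle; the one point demanding attention is the bookkeeping about the shape and size of the intermediate object. I must check that Lemma~\ref{lm:compTT} genuinely keeps us inside the TT class --- so that the $@$-removal step introduces no accumulating parameters and produces an FT rather than an MFT --- and that the intermediate TT $M'$ has size $O(|\Sigma|\,|M_1|\,|M_2|)$, which is exactly what the stay-move-based construction of Lemma~\ref{lm:compTT} guarantees, in contrast with the exponential blow-up of the classical constructions without stay moves. Beyond that, the argument is a routine re-run of the previous proof with ``MTT'' replaced by ``TT'' and ``MFT'' by ``FT''.
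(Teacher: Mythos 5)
Your proposal is correct and follows essentially the same route as the paper's own proof: decompose $M_2$ into $\tt\comp\EVAL$ via Lemma~\ref{lm:expr}(2), compose $M_1$ with the resulting TT via Lemma~\ref{lm:compTT}, and then reabsorb the eval mapping by $@$-removal to land back in FT. You merely make explicit the bookkeeping (intermediate alphabet $\Sigma\cup\{@\}$, size of the intermediate TT, absence of parameters) that the paper leaves implicit.
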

\begin{proof}
We decompose $M_2$ into $\tt\comp\eval$ in time $O(|M_2|)$
according to Lemma~\ref{lm:expr}(2).
We compose $M_1$ with the obtain TT in time $O(|\Sigma||M_1||M_2|)$.
From the obtained mapping in $\tt\comp\eval$ we construct
an FT again according to Lemma~\ref{lm:expr}(2), in linear time.
\end{proof}

\begin{theorem}
Let $M_1$ be an FT and $M_2$ an TT.
An MTT $M$ can be constructed in
time $O(|\Sigma||M_1||M_2|)$ such that 
$\sem{M} = \sem{M_1} \comp \sem{M_2}$.
\end{theorem}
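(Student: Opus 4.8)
The plan is to reduce to Lemma~\ref{lm:mttcomp} by first replacing $M_1$ with an equivalent macro tree transducer. Recall (as already noted in the text just before the first theorem of this subsection) that any FT can be turned in linear time into an equivalent MTT by rewriting every forest right-hand side into its first-child/next-sibling binary encoding, threading a single context parameter through each state to carry the binary encoding of the sibling forest onto which that state's output must be prepended. Applying this to $M_1$ yields an MTT $M_1''$ with $|M_1''|=O(|M_1|)$ whose binary tree translation coincides with that of $M_1$; that is, $M_1''$ emits $\text{fcns}(g)$ on input $\text{fcns}(f)$ exactly when $(f,g)\in\sem{M_1}$. A fresh rank-$1$ initial state is introduced so that the requirement $q_0\in Q^{(1)}$ is met, which is harmless because $M_1$'s initial state is always invoked with an empty continuation.

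Now $\sem{M_1}\comp\sem{M_2}$ is precisely the composition of the MTT $M_1''$ followed by the TT $M_2$, where both transducers may carry stay moves and default rules --- exactly the situation handled by Lemma~\ref{lm:mttcomp}. Instantiating that lemma's construction of the MTT $M$ with $\sem{M}=\sem{M_1}\comp\sem{M_2}$, taking the lemma's first transducer to be $M_1''$ and its second to be $M_2$, we obtain an MTT $M$ with $\sem{M}=\sem{M_1''}\comp\sem{M_2}=\sem{M_1}\comp\sem{M_2}$, built in time $O(|\Sigma||M_1''||M_2|)=O(|\Sigma||M_1||M_2|)$. Together with the $O(|M_1|)$ spent on the conversion, the total running time is $O(|\Sigma||M_1||M_2|)$, as required.

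The only real subtlety is the routing of the composition. It is tempting to handle $M_1$ via Lemma~\ref{lm:expr}(2) instead, decomposing it into a TT $M_1'$ followed by $\eval_\Sigma$, realizing $\eval_\Sigma$ as an MTT of size $O(|\Sigma|)$ via Lemma~\ref{lm:expr}(3), composing that with $M_2$, and composing $M_1'$ on top; but the intermediate MTT for $\eval_\Sigma$ composed with $M_2$ already has size $\Theta(|\Sigma|^2|M_2|)$, so the final composition with $M_1'$ overshoots the target bound by a factor of $|\Sigma|$. Converting $M_1$ directly to an MTT and pushing it through Lemma~\ref{lm:mttcomp} in a single step is what keeps all size parameters linear. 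One also checks, immediately from the construction, that the two intermediate binary encodings match up: $M_1''$ produces an fcns-encoded forest and $M_2$, being a top-down tree transducer, consumes an fcns-encoded forest, so the composition is applied to the correct object.
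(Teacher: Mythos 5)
Your construction is correct, but it takes a genuinely different route from the paper's. You eliminate the forest structure of $M_1$ up front, converting the FT directly into an MTT $M_1''$ with a single accumulating parameter that carries the encoded sibling continuation (the standard trick that the paper only alludes to in one sentence before its MTT-followed-by-FT theorem), and then you invoke Lemma~\ref{lm:mttcomp} exactly once. The paper instead decomposes $M_1$ as $\tt\comp\EVAL$ via Lemma~\ref{lm:expr}(2), realizes $\eval_\Sigma$ as an MTT, composes the TT with that eval-MTT first (the second direction of Lemma~\ref{lm:mttcomp}), and only then composes the resulting MTT with $M_2$. Both routes meet the $O(|\Sigma||M_1||M_2|)$ bound, and yours is arguably the more economical one, needing a single application of the composition lemma rather than two; the bookkeeping of fcns encodings in your argument is at the same level of informality as the paper's own.

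One correction to your closing remark, though: your dismissal of the eval-based route is not accurate. You bracket the three-fold composition as $M_1'$ composed with $(\eval_\Sigma$ composed with $M_2)$, and that association does overshoot. The paper brackets it the other way, composing $M_1'$ with $\eval_\Sigma$ first, and this composition does \emph{not} pay the $|\Sigma|$-factor: the eval-MTT has no $(p,a)$-rules for $a\in\Sigma$ whatsoever, only default and $\varepsilon$-rules, so the preliminary rule-padding step of the construction in Lemma~\ref{lm:compTT} adds nothing to the first transducer, and the intermediate MTT has size $O(|M_1|)$. The final composition with $M_2$ then stays within $O(|\Sigma||M_1||M_2|)$. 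So the eval-based route works as well; it merely requires the right association.
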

\begin{proof}
We decompose $M_1$ into $\tt\comp\eval$ in time $O(|M_1|)$
according to Lemma~\ref{lm:expr}(2).
The eval-mapping can be turned into an MTT in linear time.
We compose the first TT and this MTT into one MTT,
according to Lemma~\ref{lm:mttcomp}. In this case we do not
need to add extra rules to the first transducer, and therefore
do not pay the $\Sigma$-factor. 
The reason is that the MTT for eval has no $a$-rules for $a\in\Sigma$
whatsoever: it only consists of default and $\varepsilon$-rules.
Finally, we compose the obtained
MTT with $M_2$ to obtain the desired MTT in time $O(|\Sigma||M_1||M_2|)$.
\end{proof}

\section{Experiments}

\begin{figure}[t]
{\scriptsize
\begin{Verbatim}[frame=single,label={Benchmark Queries}]
<query01>{
for $person in $input/site/people/person
               [./person_id/text()="person0"]
return $person/name/text()}</query01>

<query02>{
for $open_auction in /site/open_auctions/open_auction return
 <increase>{ for $increase in $open_auction/bidder/increase return
   <bid>{$increase/text()}</bid> }</increase>
}</query02>

<query04>{
for $b in $input/site/open_auctions/open_auction
          [./bidder[./personref/personref_person/text()="personXX"]
            /following-sibling::bidder/personref/personref_person
            /text()="personYY"]
return <history>{$b/reserve/text()}</history>}</query04>

<query13>{
for $item in $input/site/regions/australia/item
return <item><name>{$item/name/text()}</name>
             <description>{$item/description}</description></item>
}</query13>

<query16>{
for $closed_auction in $input/site/closed_auctions/closed_auction
                [./annotation/description/parlist/listitem/parlist
                  /listitem/text/emph/keyword/text()] return 
  <person><id>{$closed_auction/seller/seller_person}</id></person>
}</query16>

<query17>{
for $person in $input/site/people/person[empty(./homepage/text())] 
return <person><name>{$person/name/text()}</name></person>
}</query17>

<double><r1>{$input/*}</r1>{$input/*}</double>

<fourstar>{$input//*//*//*//*}</fourstar>

<deepdup>{ for $x in $input/* return 
 <r> { for $y in $x/* return <r1><r2>{$y}</r2>{$y}</r1> } </r>
}</deepdup>
\end{Verbatim}
}
%

\caption{XQuery Benchmark Programs}
\label{fig:queries}
\end{figure}

We have implemented in OCaml both our translation
from \minXQuery programs into MFTs 
and MFT optimizations.
In this section,
we present experimental results of our implementation
by connecting with the stream processor generators for MFTs
by Nakano and Mu~\cite{DBLP:conf/aplas/NakanoM06}.
All experiments are conducted on
an Apple XServe with 2.93 GHz 8-core Intel Xeon and 48 GB main memory.
In the experiments, 
we compare our implementation with
GCX~\cite{DBLP:conf/vldb/KochSS07,DBLP:conf/icde/SchmidtSK07} and Saxon~\cite{saxon},
both of which are stream processors for XQuery.
GCX supports only a subset of XQuery like ours,
while Saxon covers full features of XQuery.
The experiments shot
that our MFT-based approach achieves performance on a par with GCX.
We only show the numbers of comparison between ours and GCX
because Saxon is much slower than the two.
The reason comes from the fact that the Saxon's streaming is optimized
for memory, not for speed.
It is implemented in Java which has much overhead in loading
the Java VM and warming up the hotspot compiler when run from the command line.
A comparison with Saxon is not fair anyhow, because it supports
the full standards, while GCX and our engine only implement a small
subset of XQuery.

\newcommand\putgraph[2]{%
\subfigure[#1]{%
\begin{minipage}{.45\textwidth}\center
\hspace*{5ex}\includegraphics[scale=0.8]{#2}
\end{minipage}
}}
\newcommand\putgraphX[2]{%
\subfigure[#1]{%
\begin{minipage}{.30\textwidth}\center
\hspace*{3ex}\includegraphics[scale=0.8]{#2}
\end{minipage}
}}
\begin{figure*}
~\\[-10ex]
\center
\putgraphX{XMark Q1\label{fig:xmarkQ1}}{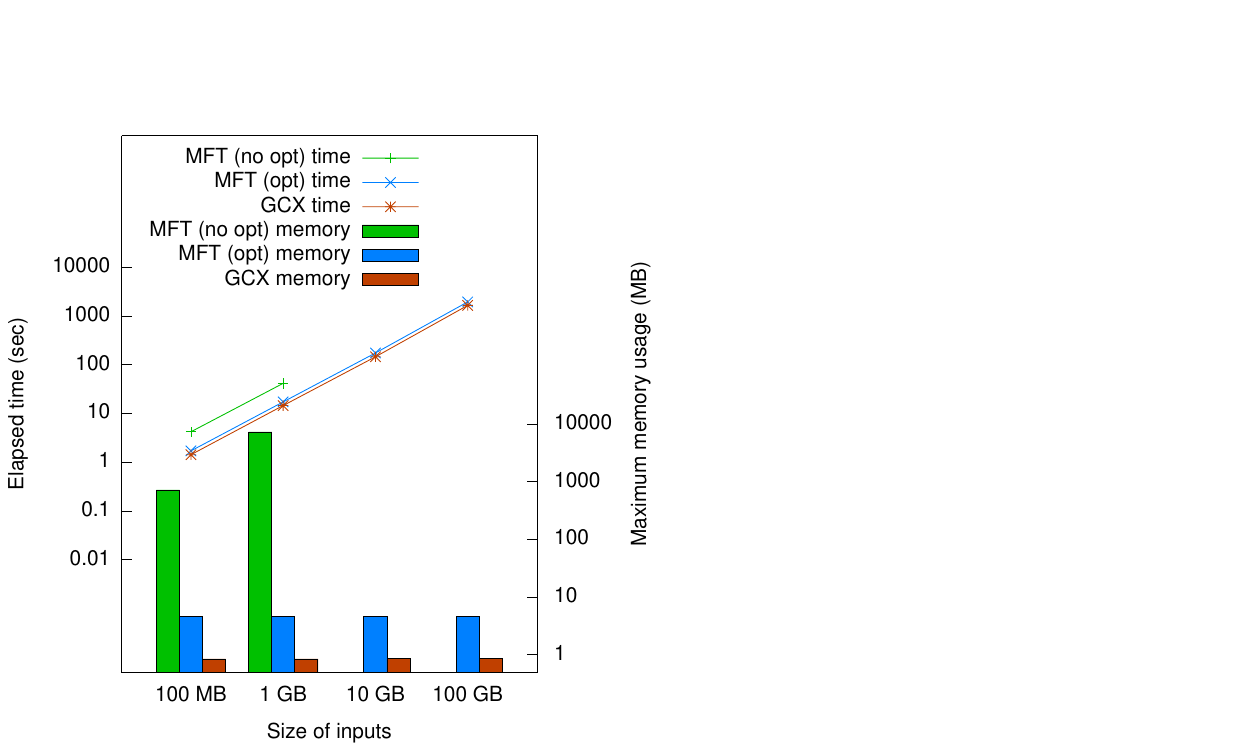}
\putgraphX{XMark Q2\label{fig:xmarkQ2}}{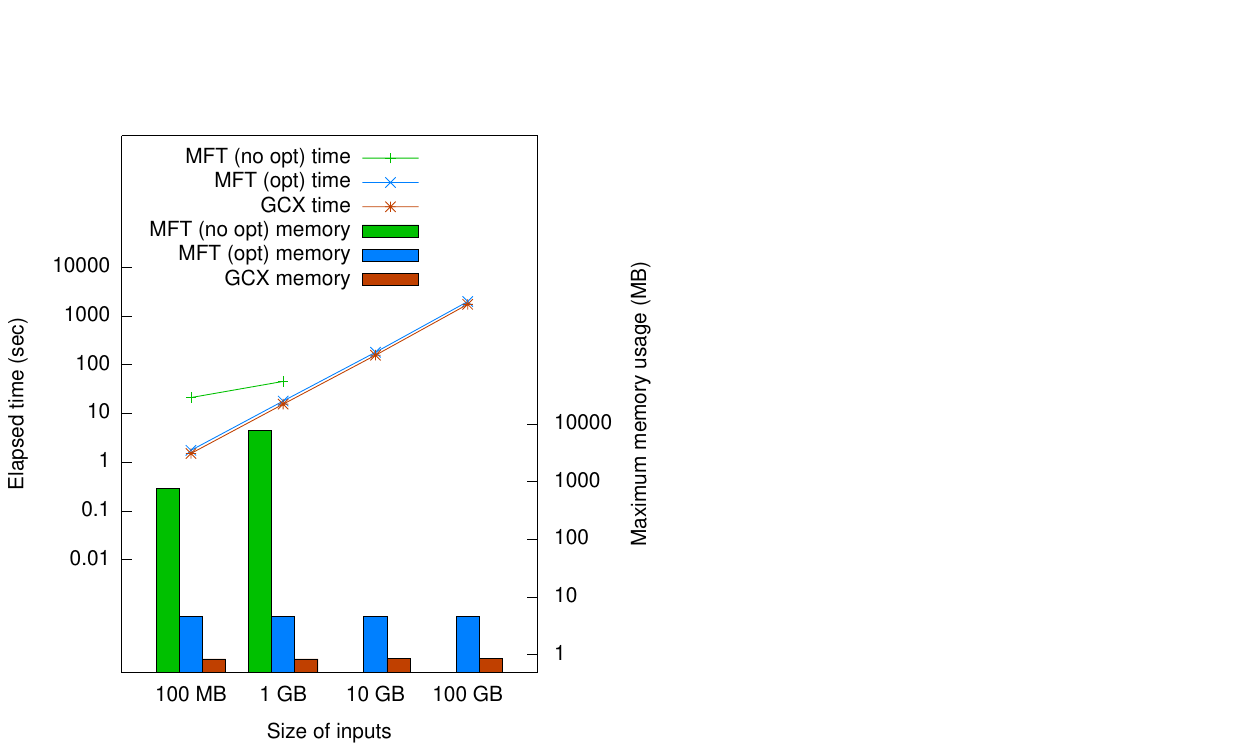}
\putgraphX{XMark Q4\label{fig:xmarkQ4}}{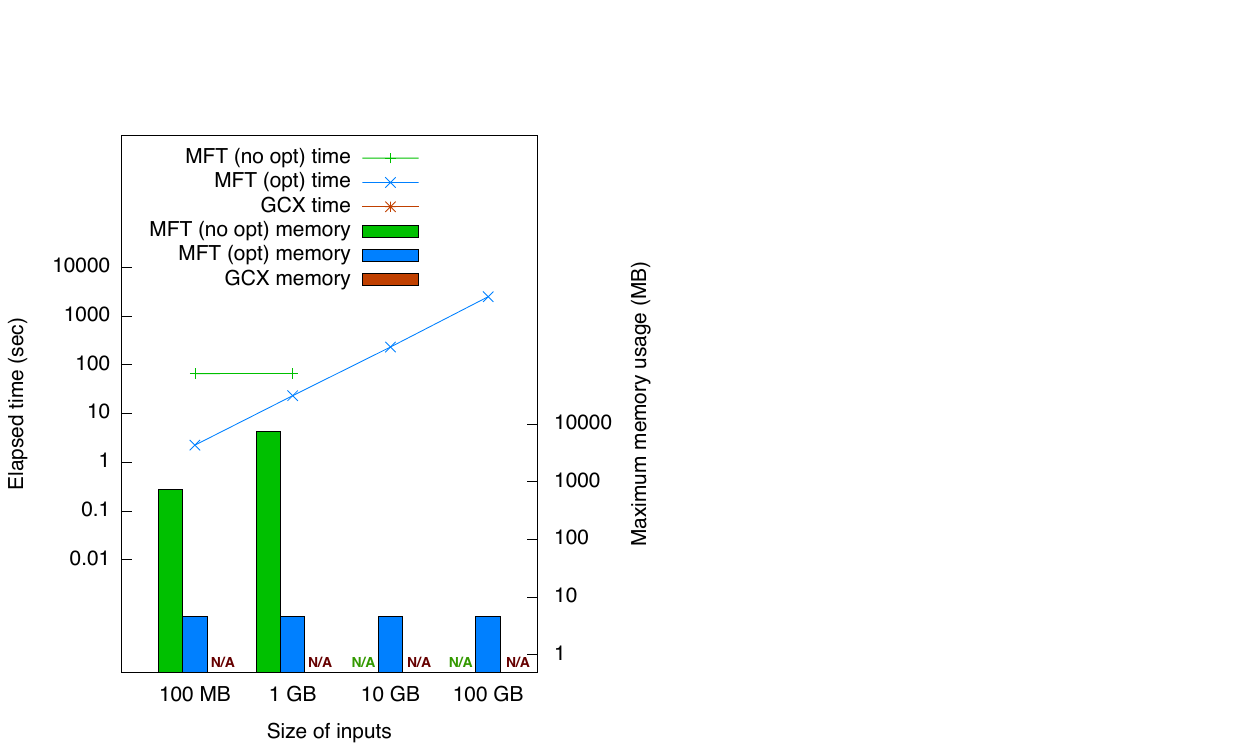}
\\[-3ex]
\putgraphX{XMark Q13\label{fig:xmarkQ13}}{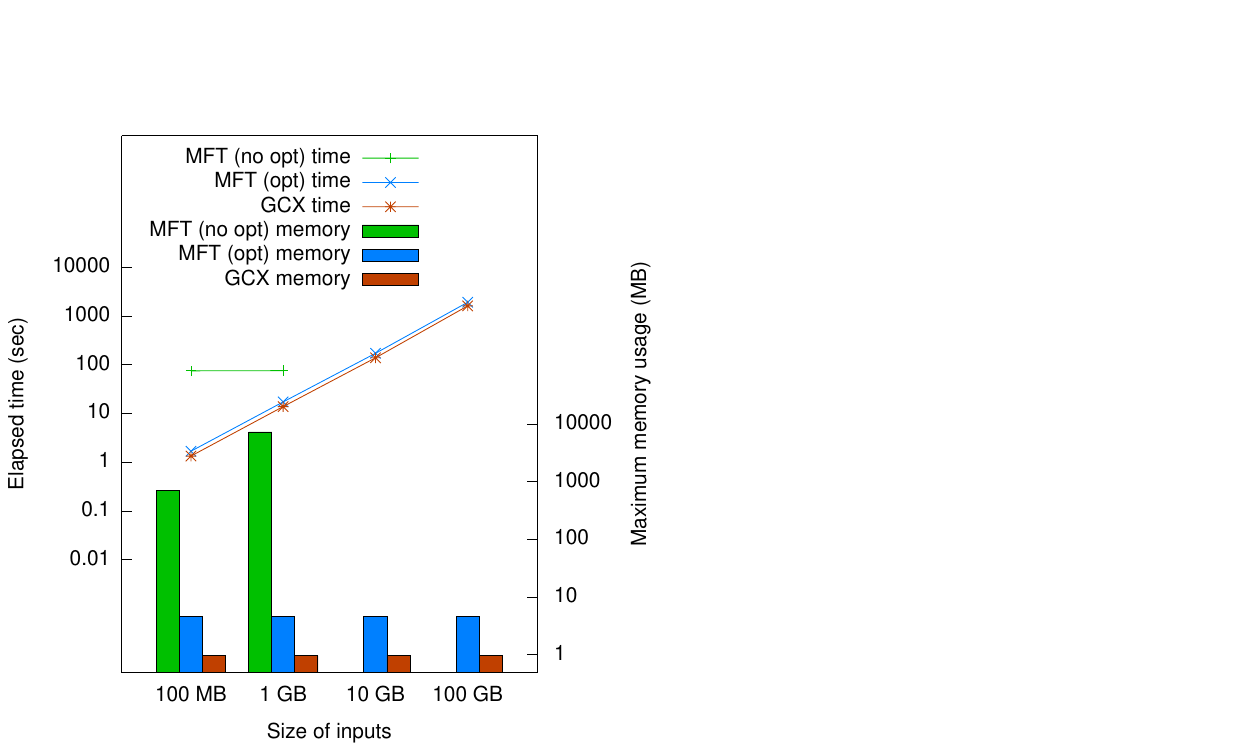}
\putgraphX{XMark Q16\label{fig:xmarkQ16}}{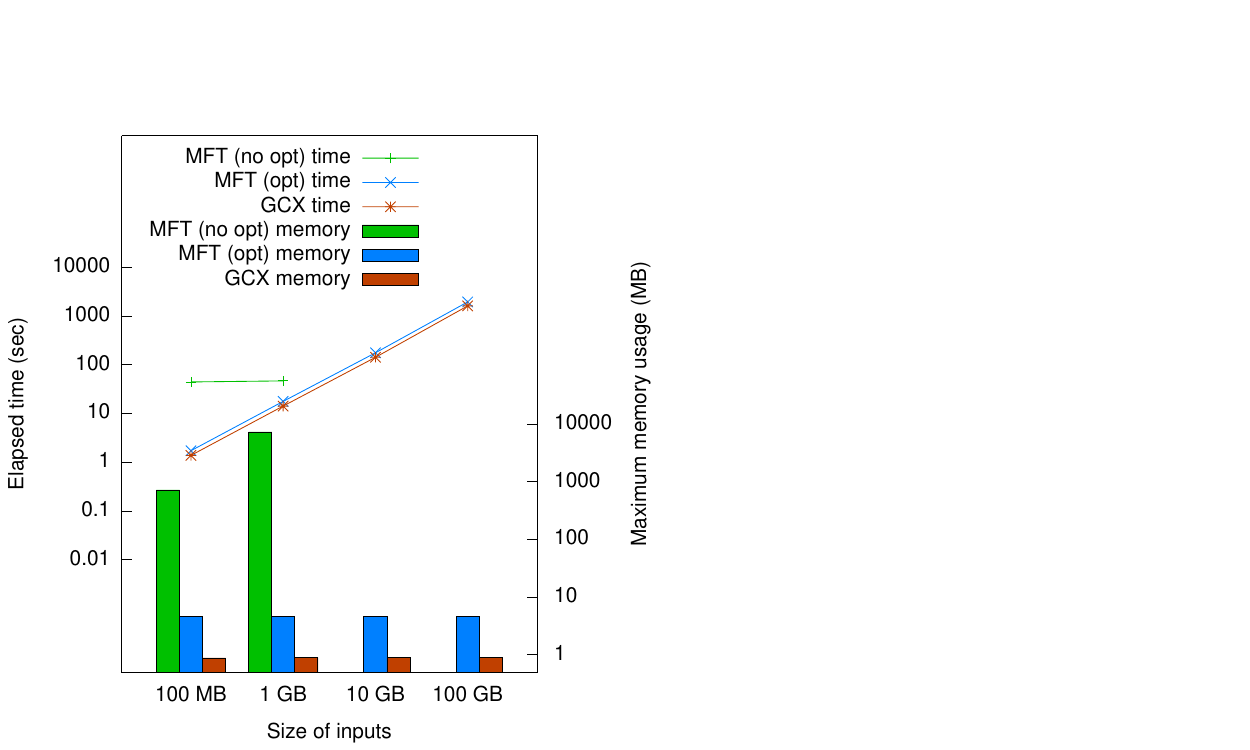}
\putgraphX{XMark Q17\label{fig:xmarkQ17}}{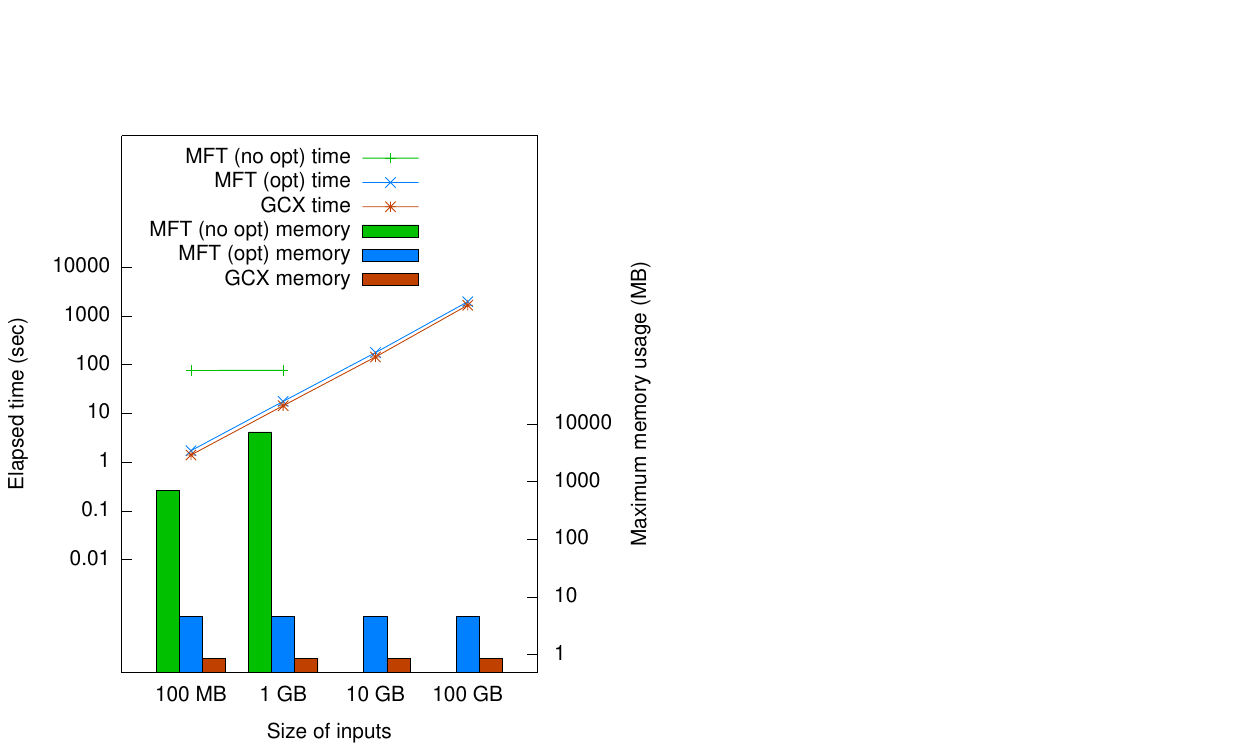}
\\[-3ex]
\putgraphX{Double query\label{fig:double}}{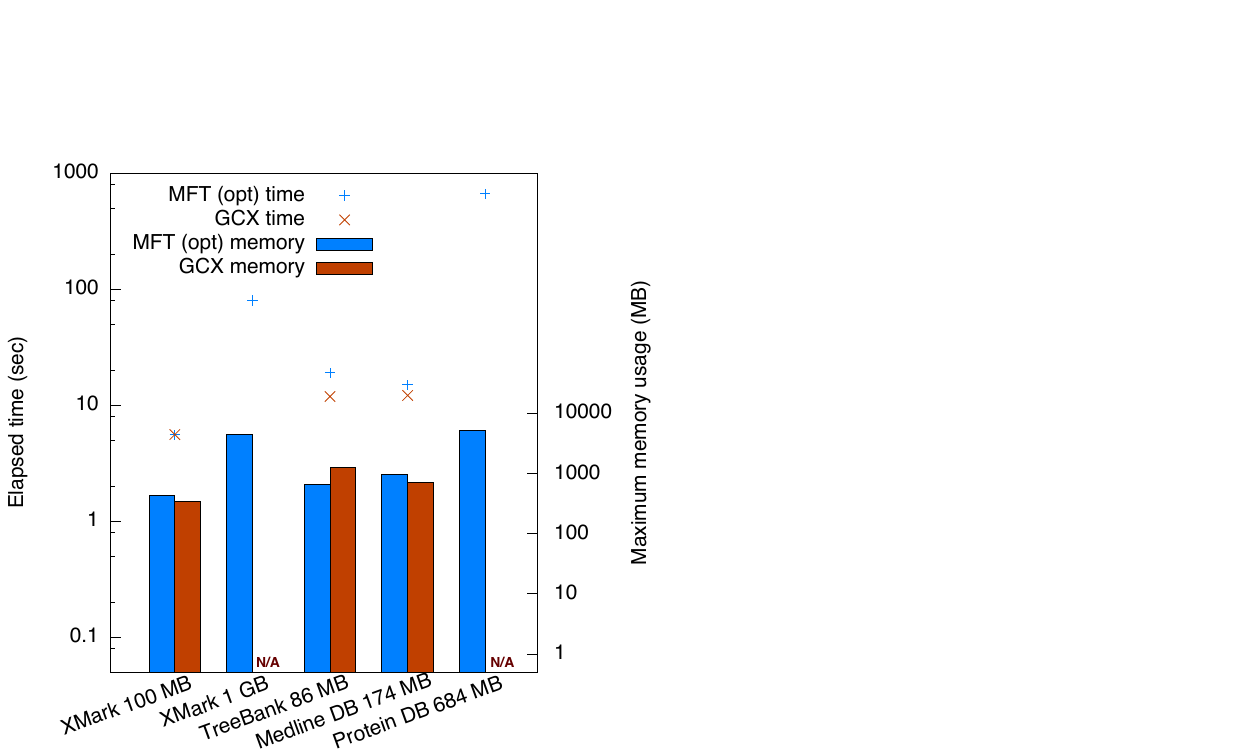}
\putgraphX{4star query\label{fig:4star}}{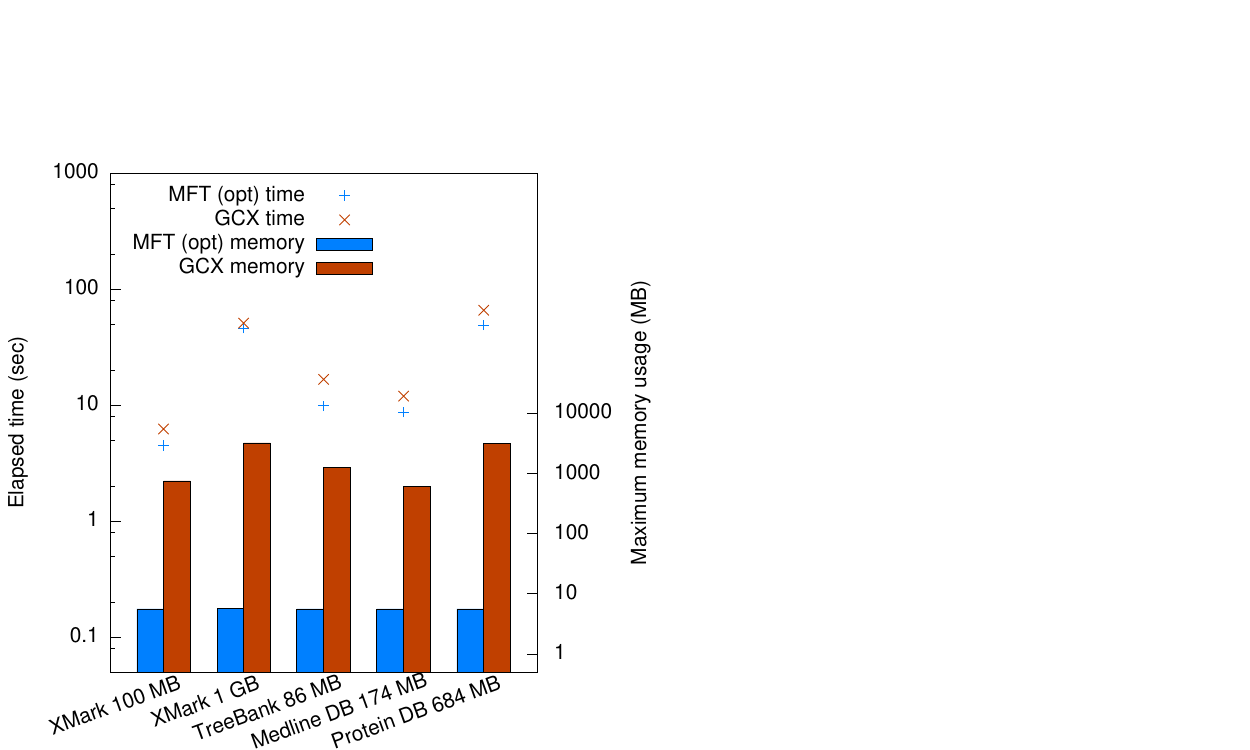}
\putgraphX{Deepdup query\label{fig:deepdup}}{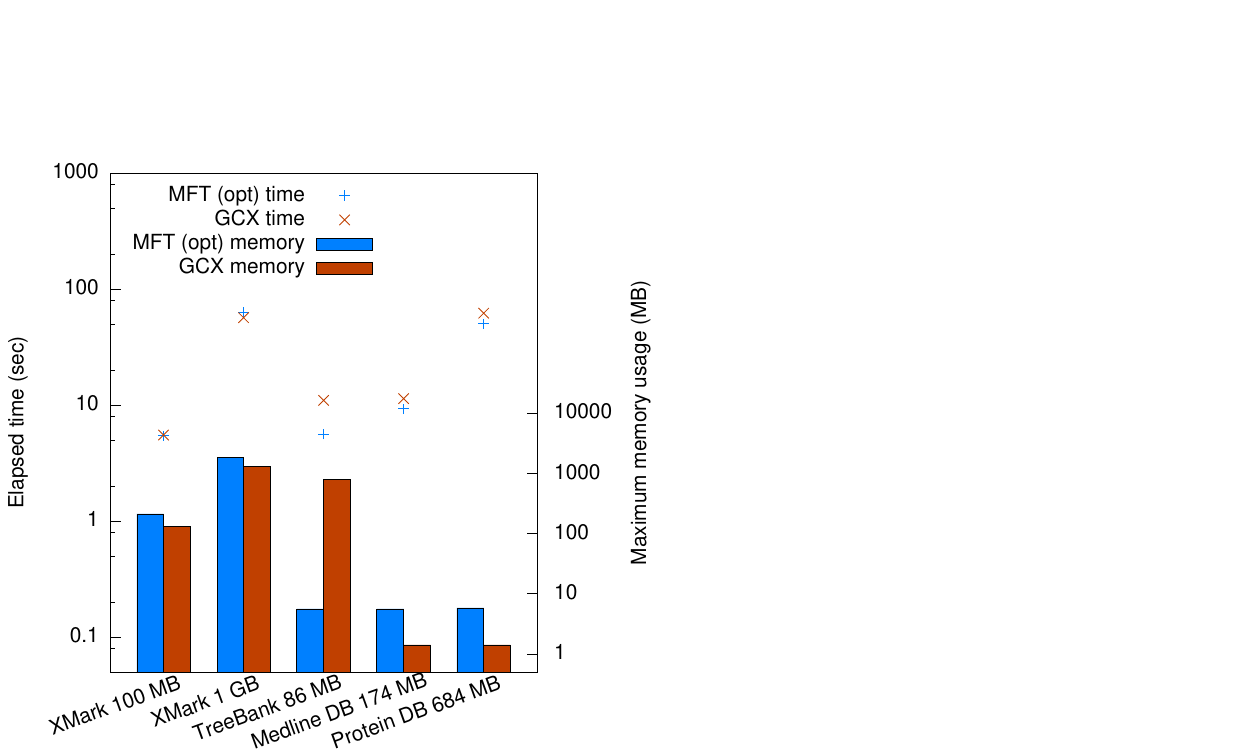}
\caption{Benchmark results}
\label{fig:experiments}
\end{figure*}

We run over XMark documents~\cite{DBLP:conf/vldb/SchmidtWKCMB02,xmark}
with sizes ranging from 100 MB to 100 GB for benchmarking.
The GCX distribution comes with example queries for XMark
which are adapted versions of the standard XMark XQuery benchmark queries.
These $20$ queries are of the following three different types:
\begin{enumerate}
\item[(1)] queries with small output (at most 1\% of the input size)
realizing simple XPath selection (downward axes only)
\item[(2)] queries producing huge output of quadratic size 
\item[(3)] queries that use data value joins
\end{enumerate}
Out of the $20$ queries, $16$ are of type (1). 
From these we have picked the six most diverse queries:
Q1, Q2, Q4, Q13, Q16, and Q17.
Queries of type (2) might not be used in practice very often
(they are not quadratic in their original form of the XMark benchmark). 
Since we do not support join yet, we do not test queries of type (3).
It is not difficult to add standard join procedures to our tool, and
we do not expect major time differences in comparison to GCX; this is
left for future work.

Figure~\ref{fig:experiments} shows 
results of the comparison between
streaming MFT and GCX
on total elapsed time and maximum memory consumption.
We ran some queries in XMark and simple examples of XQuery
with large-sized input XML data.
Missing data in the graphs indicates a failure of execution because of out-of-memory
except for Figure~\ref{fig:xmarkQ4} in which GCX fails because of the lack of expressiveness.
We also investigate performance of streaming MFTs translated from XQuery
before and after applying the optimizations discussed in Section~\ref{sect:opt},
which are referred by ``MFT (no-opt)'' and ``MFT (opt)'' in the graphs, respectively.
As can be seen, the optimized MFT and GCX consume constant-sized memory,
independent of the size of inputs.
On the other hand, the unoptimized MFT consumes much more space
and often fails to process large-sized inputs.
This is because the MFTs translated from XQuery contain many redundant parameters
for every variable
which may not be used for a part of the output.
Since an unoptimized MFT necessarily stores an entire input
for the \|$input| variable, it cannot benefit from streaming-style evaluation.
Therefore, the optimization phases are indispensable for our MFT-translation and
the used MFT stream processor.

The memory consumptions differ by a factor of about three
between the optimized MFT and GCX.
This is because of differences of their implementation languages
and the employed XML parsers.
Our streaming engine is written in OCaml, while GCX is implemented in C++.
The measured OCaml memory footprint (including the Expat XML parser) was about 4.5 MB,
which was the major factor of the difference.
Additionally, GCX could have advantages on memory usage
since it employs its own XML parser which does not handle attributes,
namespaces and character codes.
This difference also affects the comparison results on elapsed time.

The XQuery programs we used are listed in Figure~\ref{fig:queries}.
Although the \|//| step is not shown in the syntax of \minXQuery,
it is supported in our implementation in a usual way.
We modify every program for benchmark it on GCX
by replacing XPath predicates with \|where|-clauses.

Let us discuss the results for each query.
Figure~\ref{fig:xmarkQ1} shows the results of running XMark Q1
which requires a simple lookahead because of XPath predicates.
Our implementation is about 18\% slower than GCX.
Figure~\ref{fig:xmarkQ2} shows the results of running XMark Q2
which contains nested for-loops.
Since this query contains neither XPath predicates nor let-clauses,
all accumulating parameters can be removed,
i.e., the optimized MFT is in FT.
In this query, the elapsed times of our MFT-based approach are very close to GCX.
%
Figure~\ref{fig:xmarkQ4} shows the results of running XMark Q4
which requires selecting according to the sibling order.
It is an interesting example in the sense that
the query requires a nested XPath predicate like
\begin{verbatim}
  /site/open_auctions/open_auction
    [./bidder[./personref//text()="person111"]
      /following-sibling::bidder
               /personref//text()="person222"].
\end{verbatim}
GCX fails to run because the \|following-sibling| axis is not supported.
Figure~\ref{fig:xmarkQ13} shows the results of running XMark Q13
which requires reconstruction of XML data in the result.
Since this query satisfies the condition of Theorem~\ref{th:noparam},
all accumulating parameters can be removed,
i.e., the optimized MFT is an FT.
Our implementation is about 20\% slower than GCX.
%
%
Our implementation is about 23\% slower than GCX.
Figure~\ref{fig:xmarkQ16} shows the results of running XMark Q16
which involves an XPath expression with a very long XPath predicate.
In this query, we need to very deep lookahead in order to select nodes
that may be harmful for stream processing.
Although our MFT translation introduces many states with rank 3,
the performance of our MFTs is still acceptable.
Figure~\ref{fig:xmarkQ17} shows the results of running XMark Q17
which involves a negative XPath predicate \|[empty(./homepage/text())]|.
The experiments show results similar to others.
Our implementation is about 21\% slower than GCX.

\begin{table}\center
\caption{Input XML files for benchmark}
\label{tbl:inputs}
\begin{tabular}{|l|r|r|}
\hline
\multicolumn1{|c|}{\hspace*{30ex}}&
\multicolumn1{|c|}{\parbox[b]{12ex}{\centering size}}&
\multicolumn1{|c|}{\parbox[b]{12ex}{\centering depth}}
\\\hline
XMark & any & 13
\\\hline
TreeBank DB & 86 MB & 37 
\\\hline
Medline DB & 174 MB & 8
\\\hline
Protein Sequence DB & 684 MB & 8 
\\\hline
\end{tabular}\\[.5ex]
\hspace{20ex}{\scriptsize All attribute nodes are encoded as element nodes.\/}
\end{table}

The last three queries test corner cases: 
the input doubling,
the selection via the XPath query \|//*//*//*//*|,
and deep duplication with a nested for-loop.
Here our MFT-based implementation shows better results than GCX.
Figure~\ref{fig:double} shows the results of the input doubling query
that outputs the input XML twice,
which requires the entire input to be stored in memory for the second output.
This example shows the our implementation enables to run in streaming style
even such an extreme case.
GCX seems buggy on the input doubling query
(even the identity mapping like \|<out>{$input/*}</out>|),
which fails to process when the size of an input is larger than 200 MB.
Figure~\ref{fig:4star} shows the results of running the node selection 
with XPath \|//*//*//*//*|.
Our implementation is 26 \% faster on average
and 41 \% faster at the maximum than GCX.
Figure~\ref{fig:deepdup} shows the results of the query which require
a variable copying many times.
Our implementation is 15 \% faster on average
and 48 \% faster at the maximum than GCX.
We tried various kinds of XML data shown in Table~\ref{tbl:inputs}.
Treebank DB has very deep tree structures
even when the size is not so large.
According to the results,
the depth affects the performance of stream processing even when the total size is small.
It is difficult to give a general statement
on when our streaming has an advantage over GCX.
We leave to future work more investigation.

\section{Conclusions}

We present a translation of XQuery fragments to MFTs.
This fragment is larger than the one supported by  GCX, the
fastest XQuery streaming tool we know.
The main difference to GCX is that we support XPath predicates 
and let-statements. Moreover, 
MFTs are more general and allow to easily program recursive
function definitions. 
Many useful static analyses are known for tree transducers
and can be applied to our MFTs. We applied three of them:
stay-move removal, useless parameter removal, and constant parameter
removal. The optimized MFTs are often faster by one order 
of magnitude in comparison to the unoptimized ones.
We present several efficient composition constructions for subclasses
of MFTs. These are useful so that intermediate stream results can
be avoided. 
We believe that MFTs are a robust and appropriate intermediate compilation 
framework for streaming of XQuery.
In the future we plan to apply further static analyses known for MFTs.
We want to experiment with partial transducers that are obtained by 
composing with a domain check that corresponds to a DTD or XML Schema.
We would like to minimize  MFTs by using (a relaxed version of) the
``earliest normal form'', similar to the one known for 
deterministic top-down tree transducers of Engelfriet, Maneth, and Seidl~\cite{DBLP:journals/jcss/EngelfrietMS09}.
An earliest transducer produces its output as early as possible
during translation.
We would like to compress the output trees produced by a transducer:
Macro forest transducers can have doubly exponential size increase. 
This means that the size of an output tree is $O(2^{2^{n}})$, where
$n$ is the size of the corresponding input tree.
Their outputs can, 
however, be represented using grammar-based compression in \emph{linear space} with
         respect to the input size~\cite{DBLP:conf/fossacs/ManethB04}. Thus,
the output stream is guaranteed to be of linear size. 
It is a challenging open question how to execute
an MFT over an input stream that is grammar-compressed.
It was recently shown by Maneth, Ordonez, and Seidl~\cite{mos} 
that top-down tree transducers can executed in constant memory
over DAG-compressed tree streams.
Their  DAG streams contain forward references to definitions that
appear later in the stream. Alternatively, a tree can be shredded
into small parallel streams, as considered by
Labath and Niehren~\cite{ln13}.
Last, we want to study parallel processing of streams by MFTs; 
parallel execution of MFTs has 
been considered by Morihata~\cite{DBLP:conf/aplas/Morihata11}.

\section*{Acknowledgment}
We thank Michael Benedikt and anonymous reviewers
for their valuable comments.
This work was partially supported by JSPS KAKENHI Grant Number 25730002.
Sebastian Maneth was supported by the Engineering and Physical Sciences Research
Council project ``Enforcement of Constraints on XML streams'' (EPSRC EP/G004021/1).

\bibliographystyle{abbrv}
\bibliography{lib} 

\begin{thebibliography}{10}

\bibitem{DBLP:journals/iandc/Baker79b}
B.~S. Baker.
\newblock Composition of top-down and bottom-up tree transductions.
\newblock {\em Information and Control}, 41(2):186--213, 1979.

\bibitem{DBLP:journals/jcss/Bar-YossefFJ07}
Z.~Bar-Yossef, M.~Fontoura, and V.~Josifovski.
\newblock On the memory requirements of {XPath} evaluation over {XML} streams.
\newblock {\em J. Comput. Syst. Sci.}, 73(3):391--441, 2007.

\bibitem{DBLP:conf/wia/DebarbieuxGNSZ13}
D.~Debarbieux, O.~Gauwin, J.~Niehren, T.~Sebastian, and M.~Zergaoui.
\newblock Early nested word automata for {XPath} query answering on {XML}
  streams.
\newblock In {\em CIAA}, pages 292--305, 2013.

\bibitem{DBLP:journals/informaticaSI/Dvorakova08}
J.~Dvorakov{\'a}.
\newblock Automatic streaming processing of {XSLT} transformations based on
  tree transducers.
\newblock {\em Informatica (Slovenia)}, 32(4):373--382, 2008.

\bibitem{DBLP:journals/acta/EngelfrietM03}
J.~Engelfriet and S.~Maneth.
\newblock A comparison of pebble tree transducers with macro tree transducers.
\newblock {\em Acta Inf.}, 39(9):613--698, 2003.

\bibitem{DBLP:journals/siamcomp/EngelfrietM03}
J.~Engelfriet and S.~Maneth.
\newblock Macro tree translations of linear size increase are {MSO} definable.
\newblock {\em SIAM J. Comput.}, 32(4):950--1006, 2003.

\bibitem{DBLP:journals/ipl/EngelfrietM06}
J.~Engelfriet and S.~Maneth.
\newblock The equivalence problem for deterministic {MSO} tree transducers is
  decidable.
\newblock {\em Inf. Process. Lett.}, 100(5):206--212, 2006.

\bibitem{DBLP:journals/jcss/EngelfrietMS09}
J.~Engelfriet, S.~Maneth, and H.~Seidl.
\newblock Deciding equivalence of top-down {XML} transformations in polynomial
  time.
\newblock {\em J. Comput. Syst. Sci.}, 75(5):271--286, 2009.

\bibitem{DBLP:journals/jcss/EngelfrietV85}
J.~Engelfriet and H.~Vogler.
\newblock Macro tree transducers.
\newblock {\em J. Comput. Syst. Sci.}, 31(1):71--146, 1985.

\bibitem{DBLP:conf/icde/FernandezMSS07}
M.~F. Fern{\'a}ndez, P.~Michiels, J.~Sim{\'e}on, and M.~Stark.
\newblock {XQuery} streaming {\`a} la carte.
\newblock In {\em ICDE}, pages 256--265, 2007.

\bibitem{DBLP:journals/vldb/FlorescuHKLRWCS04}
D.~Florescu, C.~Hillery, D.~Kossmann, P.~Lucas, F.~Riccardi, T.~Westmann, M.~J.
  Carey, and A.~Sundararajan.
\newblock The {BEA} streaming {XQuery} processor.
\newblock {\em VLDB J.}, 13(3):294--315, 2004.

\bibitem{DBLP:conf/planX/FrischN07}
A.~Frisch and K.~Nakano.
\newblock Streaming {XML} transformation using term rewriting.
\newblock In {\em PLAN-X}, pages 2--13, 2007.

\bibitem{DBLP:conf/wia/GauwinN11}
O.~Gauwin and J.~Niehren.
\newblock Streamable fragments of forward {XPath}.
\newblock In {\em CIAA}, pages 3--15, 2011.

\bibitem{DBLP:journals/iandc/GauwinNT11}
O.~Gauwin, J.~Niehren, and S.~Tison.
\newblock Queries on {XML} streams with bounded delay and concurrency.
\newblock {\em Inf. Comput.}, 209(3):409--442, 2011.

\bibitem{DBLP:journals/jlp/GieslKV07}
J.~Giesl, A.~K{\"u}hnemann, and J.~Voigtl{\"a}nder.
\newblock Deaccumulation techniques for improving provability.
\newblock {\em J. Log. Algebr. Program.}, 71(2):79--113, 2007.

\bibitem{DBLP:journals/tods/GreenGMOS04}
T.~J. Green, A.~Gupta, G.~Miklau, M.~Onizuka, and D.~Suciu.
\newblock Processing {XML} streams with deterministic automata and stream
  indexes.
\newblock {\em ACM Trans. Database Syst.}, 29(4):752--788, 2004.

\bibitem{DBLP:journals/debu/Kay08}
M.~Kay.
\newblock Ten reasons why {Saxon} {XQuery} is fast.
\newblock {\em IEEE Data Eng. Bull.}, 31(4):65--74, 2008.

\bibitem{DBLP:conf/vldb/KochSS07}
C.~Koch, S.~Scherzinger, and M.~Schmidt.
\newblock The {GCX} system: Dynamic buffer minimization in streaming {XQuery}
  evaluation.
\newblock In {\em VLDB}, pages 1378--1381, 2007.

\bibitem{DBLP:conf/vldb/KochSSS04a}
C.~Koch, S.~Scherzinger, N.~Schweikardt, and B.~Stegmaier.
\newblock {FluXQuery}: An optimizing {XQuery} processor for streaming xml data.
\newblock In {\em VLDB}, pages 1309--1312, 2004.

\bibitem{DBLP:journals/jfp/KodamaSK08}
K.~Kodama, K.~Suenaga, and N.~Kobayashi.
\newblock Translation of tree-processing programs into stream-processing
  programs based on ordered linear type.
\newblock {\em J. Funct. Program.}, 18(3):333--371, 2008.

\bibitem{ln13}
P.~Labath and J.~Niehren.
\newblock A functional language for hyperstreaming {XSLT}.
\newblock Unpublished manuscript available at
  http://researchers.lille.inria.fr/~niehren/Papers/X-Fun/0.pdf, 2013.

\bibitem{DBLP:conf/vldb/LudascherMP02}
B.~Lud{\"a}scher, P.~Mukhopadhyay, and Y.~Papakonstantinou.
\newblock A transducer-based xml query processor.
\newblock In {\em VLDB}, pages 227--238. Morgan Kaufmann, 2002.

\bibitem{DBLP:conf/fsttcs/Maneth03}
S.~Maneth.
\newblock The macro tree transducer hierarchy collapses for functions of linear
  size increase.
\newblock In {\em FSTTCS}, pages 326--337, 2003.

\bibitem{DBLP:conf/pods/ManethBPS05}
S.~Maneth, A.~Berlea, T.~Perst, and H.~Seidl.
\newblock {XML} type checking with macro tree transducers.
\newblock In {\em PODS}, pages 283--294, 2005.

\bibitem{DBLP:conf/fossacs/ManethB04}
S.~Maneth and G.~Busatto.
\newblock Tree transducers and tree compressions.
\newblock In {\em FoSSaCS}, pages 363--377, 2004.

\bibitem{mos}
S.~Maneth, A.~Ordones, and H.~Seidl.
\newblock Constant-memory streaming of {XML} transformations.
\newblock In preparation., 2013.

\bibitem{DBLP:conf/icdt/ManethPS07}
S.~Maneth, T.~Perst, and H.~Seidl.
\newblock Exact {XML} type checking in polynomial time.
\newblock In {\em ICDT}, pages 254--268, 2007.

\bibitem{tw}
S.~Maneth, S.~Pott, and H.~Seidl.
\newblock Type checking of tree walking transducers.
\newblock In {\em Modern Applications of Automata Theory}, pages 325--372.
  World Scientific, 2012.

\bibitem{DBLP:conf/aplas/Morihata11}
A.~Morihata.
\newblock Macro tree transformations of linear size increase achieve
  cost-optimal parallelism.
\newblock In {\em APLAS}, pages 204--219, 2011.

\bibitem{DBLP:conf/aplas/NakanoM06}
K.~Nakano and S.-C. Mu.
\newblock A pushdown machine for recursive {XML} processing.
\newblock In {\em APLAS}, pages 340--356, 2006.

\bibitem{DBLP:journals/tkde/Olteanu07}
D.~Olteanu.
\newblock {SPEX}: Streamed and progressive evaluation of {XP}ath.
\newblock {\em IEEE Trans. Knowl. Data Eng.}, 19(7):934--949, 2007.

\bibitem{DBLP:journals/ipl/PerstS04}
T.~Perst and H.~Seidl.
\newblock Macro forest transducers.
\newblock {\em Inf. Process. Lett.}, 89(3):141--149, 2004.

\bibitem{DBLP:journals/mst/Rounds70}
W.~C. Rounds.
\newblock Mappings and grammars on trees.
\newblock {\em Mathematical Systems Theory}, 4(3):257--287, 1970.

\bibitem{saxon}
{Saxon}: The {XSLT} and {XQuery} processor.
\newblock {\url{http://saxon.sourceforge.net/}}.

\bibitem{DBLP:conf/vldb/SchmidtWKCMB02}
A.~Schmidt, F.~Waas, M.~L. Kersten, M.~J. Carey, I.~Manolescu, and R.~Busse.
\newblock {XMark}: A benchmark for {XML} data management.
\newblock In {\em VLDB}, pages 974--985, 2002.

\bibitem{DBLP:conf/icde/SchmidtSK07}
M.~Schmidt, S.~Scherzinger, and C.~Koch.
\newblock Combined static and dynamic analysis for effective buffer
  minimization in streaming {XQuery} evaluation.
\newblock In {\em ICDE}, pages 236--245, 2007.

\bibitem{DBLP:journals/jcss/Schwentick07}
T.~Schwentick.
\newblock Automata for {XML} - a survey.
\newblock {\em J. Comput. Syst. Sci.}, 73(3):289--315, 2007.

\bibitem{DBLP:conf/icde/ShalemB08}
M.~Shalem and Z.~Bar-Yossef.
\newblock The space complexity of processing {XML} twig queries over indexed
  documents.
\newblock In {\em ICDE}, pages 824--832, 2008.

\bibitem{DBLP:journals/tcs/Wadler90}
P.~Wadler.
\newblock Deforestation: Transforming programs to eliminate trees.
\newblock {\em Theor. Comput. Sci.}, 73(2):231--248, 1990.

\bibitem{DBLP:conf/afp/Wadler02}
P.~Wadler.
\newblock {XQuery}: A typed functional language for querying {XML}.
\newblock In {\em Advanced Functional Programming}, pages 188--212, 2002.

\bibitem{DBLP:journals/dke/WeiRML08}
M.~Wei, E.~A. Rundensteiner, M.~Mani, and M.~Li.
\newblock Processing recursive {XQuery} over {XML} streams: The raindrop
  approach.
\newblock {\em Data Knowl. Eng.}, 65(2):243--265, 2008.

\bibitem{xmark}
{XMark}: an {XML} benchmark project.
\newblock {\url{http://www.xml-benchmark.org/}}.

\end{thebibliography}


\begin{thebibliography}{1}

\bibitem{IEEEhowto:kopka}
H.~Kopka and P.~W. Daly, \emph{A Guide to \LaTeX}, 3rd~ed.\hskip 1em plus
  0.5em minus 0.4em\relax Harlow, England: Addison-Wesley, 1999.

\end{thebibliography}

\end{document}

\subsection{Subsection Heading Here}
Subsection text here.

\subsubsection{Subsubsection Heading Here}
Subsubsection text here.

\section{Conclusion}


\section*{Acknowledgment}

The authors would like to thank...



%

\end{document}